\PassOptionsToPackage{unicode}{hyperref}
\PassOptionsToPackage{hyphens}{url}
\PassOptionsToPackage{dvipsnames,svgnames,x11names}{xcolor}
\documentclass[12pt]{article}

\usepackage{natbib}
\usepackage{amsmath,amssymb}
\usepackage{nicefrac}
\RequirePackage{amsthm,amsfonts}
\usepackage[T1]{fontenc}
\usepackage[utf8]{inputenc}
\usepackage{lmodern}
\usepackage[]{microtype}
\UseMicrotypeSet[protrusion]{basicmath} 
\usepackage{parskip}
\usepackage{xcolor}
\usepackage{booktabs}
\usepackage{graphicx}
\usepackage{caption}
\usepackage{subcaption}

\usepackage{bookmark}
\IfFileExists{xurl.sty}{\usepackage{xurl}}{} 
\hypersetup{
  pdftitle={A bridge representation of Gaussian Whittle-Matérn fields on compact metric graphs},
  pdfauthor={David Bolin, Alexandre B. Simas, and Jonas Wallin},
  pdfkeywords={Maximum likelihood, kriging, stochastic partial differential equation, Markov random field},
  colorlinks=true,
  linkcolor={blue},
  filecolor={Maroon},
  citecolor={Blue},
  urlcolor={Blue}}

\newcommand{\proper}{\mathsf}
\newcommand{\pE}{\proper{E}}
\newcommand{\pN}{\proper{N}}
\newcommand{\mv}[1]{{\boldsymbol{\mathrm{#1}}}}
\DeclareMathOperator{\pd}{\partial}
\DeclareMathOperator{\Cov}{Cov}
\newcommand{\trsp}{\ensuremath{\top}}
\newcommand{\A}{\mathcal{C}}
\newcommand{\Ac}{\mathcal{U}}

\newtheorem{theorem}{Theorem}
\newtheorem{proposition}{Proposition}
\newtheorem{Lemma}{Lemma}
\newtheorem{Corollary}{Corollary}
\newtheorem{definition}{Definition}
\allowdisplaybreaks

\begin{document}

\title{\bf A bridge representation of Gaussian Whittle--Mat\'ern fields on compact metric graphs}
\author{David Bolin\thanks{
  The authors gratefully acknowledge support by King Abdullah University of Science and Technology (KAUST) under Award No. ORFS-CRG11-2022-5015.}, Alexandre B. Simas\hspace{.2cm}\\
  Statistics Program, King Abdullah University of Science and Technology\\
  and \\
  Jonas Wallin \\
  Department of Statistics, Lund University}
\date{}
\maketitle

\bigskip
\begin{abstract}

Gaussian Whittle--Mat\'ern fields form a flexible class of Gaussian processes on compact metric graphs, where spatial dependence is governed by the geometry and connectivity of the network through a fractional-order stochastic
partial differential equation. This paper develops a new bridge representation of these fields in the case of half-integer smoothness parameters, when the fields have Markov properties. This representation decomposes the field into a
finite-dimensional graph component and independent Whittle--Mat\'ern bridge
processes on the individual edges. The resulting decomposition leads to efficient
likelihood evaluation, kriging prediction, and
simulation methods.
We show that this improves numerical stability and can greatly reduce computation time compared to previous methods. A simulation study on a Chicago street-network graph illustrates the computational efficiency of the sampling method and an application to Madrid traffic intensity data demonstrates the practical gains for likelihood-based inference and prediction.
The methods are implemented in the \texttt{R} package \texttt{MetricGraph}.

\end{abstract}

\noindent%
{\it Keywords:} Maximum likelihood, kriging, stochastic partial differential equation, Markov random field

\bigskip

\section{Introduction}\label{sec:introduction}

Spatial data are often observed on domains whose geometry is determined by a network. 
In hydrological applications, variables such as dissolved oxygen, stream temperature, and ecological indicators are measured along river systems, 
where spatial dependence is affected by flow connectivity, branching, and distance along the stream network \citep{isaak2014applications}. 
In urban applications, traffic counts, speeds, and intensities are observed on street networks, 
where dependence is more naturally related to movement along roads than to straight-line Euclidean distance \citep{okabe2012spatial,selby2013traffic,dahl2025modeling}. 
These examples have a common statistical feature: the observation locations lie on a one-dimensional network embedded in Euclidean space, and the geometry relevant for prediction is determined by both the lengths of the edges and the connectivity of the network.

A convenient mathematical representation of such domains is a metric graph. 
A metric graph $\Gamma$ is specified by a set of vertices $\mathcal{V}\subset \mathbb{R}^p, p \in \mathbb{N}$ and a set of edges $\mathcal{E}$ connecting the vertices,
  where each edge $e$ is defined by a rectifiable curve $\gamma:[0,\ell_e]\to \mathbb{R}^p$ of length $\ell_e>0$, and parameterized by arc-length. A location $s\in \Gamma$ is a position on some edge $e\in\mathcal{E}$.

There has recently been considerable interest in defining Gaussian random fields
on metric graphs. Stream-network models \citep{CressieRiver, Hoef2006, Hoef2010} are widely used in hydrological
applications \citep{isaak2014applications}, and valid isotropic covariance models \citep{anderes2020isotropic, porcu2022} and related computational methods for simulation \citep{alegria2026computationally} have been
developed for particular classes of metric graphs. 
An approach which is valid for any compact metric graph is to formulate a Gaussian process through a stochastic partial differential equation (SPDE) on the metric graph. Specifically,  \citet{BSW2022} introduced the Whittle--Mat\'ern fields on metric graphs as the solution to 
\begin{equation}\label{eq:Matern_spde}
	(\kappa^2 - \Delta_\Gamma)^{\alpha/2} (\tau u) = \mathcal{W}, \qquad \text{on $\Gamma$},
\end{equation}
where $\kappa,\tau>0$ are parameters, 
$\Delta_\Gamma$ is a Laplacian on the graph, $\mathcal{W}$ is Gaussian white noise, and the smoothness of the field is controlled by the parameter $\nu>0$, with $\alpha = \nu + \nicefrac{1}{2}$.  
This construction is a natural extension of the classical SPDE representation of Mat\'ern fields from Euclidean domains \citep{whittle63} to metric graphs. In $\mathbb{R}^p$, the corresponding SPDE takes the form \eqref{eq:Matern_spde} with $\alpha = \nu + \tfrac{p}{2}$ and the Laplacian replacing $\Delta_\Gamma$, and its solution is a centered Gaussian random field with Mat\'ern covariance function  
\begin{equation}\label{eq:matern_cov}
	\varrho_M(h) = \frac{\tau^{-2}}{2^{\nu-1}\Gamma\!\left(\nu + \tfrac{p}{2}\right)(4\pi)^{p/2}\kappa^{2\nu}} (\kappa |h|)^{\nu} K_\nu(\kappa |h|),
\end{equation}
where $\tau, \kappa, \nu > 0$ control the marginal variance, practical correlation range, and smoothness, $K_\nu(\cdot)$ is the modified Bessel function of the second kind, and $\Gamma(\cdot)$ is the gamma function. 

\citet{split1} introduced a method for evaluating the likelihood of the Whittle--Mat\'ern fields on metric graphs exactly when $\alpha\in\mathbb{N}$, which is based on adding all observation locations as vertices in the graph and taking advantage of the Markov properties the fields have for these values of $\alpha$ \citep{BSW_Markov}. Even though this facilitates computationally efficient inference for large graphs and datasets, there are two disadvantages of the method. First, adding the observation locations as vertices requires an additional preprocessing step that may be computationally heavy for very large graphs. Second, if one has observation locations that are very close, numerical instabilities may arise, especially for $\alpha>1$. The reason for this is that one then needs to evaluate covariances for small distances, which may cause matrices to become numerically singular. This is also a well-known problem when working with Mat\'ern covariances on Euclidean domains.

In this work, we introduce a new ``bridge'' representation of Whittle--Mat\'ern fields on compact metric graphs with  $\alpha\in\mathbb{N}$. The representation decomposes the field into a sum of independent Gaussian processes on the edges,  which are zero at the vertices, and a ``low-rank'' Gaussian process defined in the vertices and interpolated to the edges.
This makes it possible to condition locally within each edge while keeping the sparse linear algebra tied to the original graph,
rather than to one expanded by all observation or prediction locations.

We use this representation to obtain computationally efficient and numerically stable formulas for likelihood
evaluation and kriging prediction. We further develop exact finite-dimensional simulation methods based on the bridge decomposition, including a kriging-corrected edge sampler. 
We demonstrate the computational advantages of the methods through a  simulation study on a Chicago street network and an application to Madrid traffic intensity data.

The rest of the paper is organized as follows. Section~\ref{sec:construction}
reviews the metric-graph notation, the Whittle--Mat\'ern construction, and the
finite-dimensional representation used by \citet{split1}.
Section~\ref{sec:bridge} derives the bridge representation.
Sections~\ref{sec:likandpred} and~\ref{sec:predict} show how the representation can be used for likelihood evaluation, prediction, and sampling.
Section~\ref{sec:simulation} presents a simulation study using the Chicago
street-network graph, and
Section~\ref{sec:application} presents the Madrid traffic application. The
article concludes with a discussion in Section~\ref{sec:discussion}. The new
methods are implemented in the \texttt{R} package \texttt{MetricGraph}
\citep{bsw_MetricGraph_cran}.

\section[Metric graph definition and notation]{Metric graph definition and notation}\label{sec:construction}
Throughout, $\Gamma$ denotes a connected compact metric graph with vertex set $\mathcal{V}$ and edge set $\mathcal{E}$.
Each edge $e\in\mathcal{E}$ is identified with an interval $[0,\ell_e]$ through an arc-length parametrization, and we write $(\underline e,\bar e)$ for the initial and terminal vertices of $e$.
Thus a location $s\in\Gamma$ can be represented as $s=(e,t)$, where $t\in[0,\ell_e]$.
This coordinate representation is unique for points in the interior of an edge.
In contrast, a vertex $v\in\mathcal{V}$ may have several coordinate representations, corresponding to the endpoints of the edges incident to $v$.
To indicate that a coordinate representation $s=(e,t)$ represents the vertex $v$, we write $s\in v$.
We denote by $\mathcal{E}_v$ the set of edges incident to $v$, by $L_v$ the set of loops based at $v$, and define
$\deg(v)=|\mathcal{E}_v|+|L_v|.$
With this convention, each loop contributes two coordinate representations at its base vertex.
We also fix enumerations $(v_1,\ldots,v_{|\mathcal{V}|})$ and $(e_1,\ldots,e_{|\mathcal{E}|})$.

For a function $f$ on $\Gamma$, we write $f_e=f|_e$ for its restriction to the edge $e$.
If $f_e$ is differentiable at an endpoint, and if $s=(e,t)$ is the corresponding coordinate representation of a vertex, then $\partial f(s)$ denotes the directional derivative of $f$ along $e$ in the direction away from that vertex. That is, if $e$ is identified with $[0,\ell_e]$, then $\partial f(e,0)=f_e'(0)$ and $\partial f(e,\ell_e)=-f_e'(\ell_e)$.

The Kirchhoff Laplacian $\Delta_\Gamma$ is the differential operator that acts as the second derivative on each edge and imposes the usual continuity and Kirchhoff vertex conditions.
Its domain is
$\mathcal{D}(\Delta_\Gamma)=\widetilde H^2(\Gamma)\cap C(\Gamma)\cap K(\Gamma),$
where
$\widetilde H^k(\Gamma)=\bigoplus_{e\in\mathcal E} H^k(e)$ is the set of functions that are $k$ times weakly differentiable on the edges, 
$C(\Gamma)=\{f\in L_2(\Gamma): f \text{ is continuous on } \Gamma\}$,
and
$$K(\Gamma)=\left\{f\in \widetilde H^2(\Gamma)\cap C(\Gamma): \sum_{s\in v}\partial f(s)=0 \text{ for every } v\in\mathcal V\right\}.$$
Thus, functions in $\mathcal{D}(\Delta_\Gamma)$ are continuous across vertices and satisfy a sum-to-zero condition on the outgoing derivatives at each vertex.
For vertices of degree two that are not loops, this condition reduces to continuity of the derivative across the vertex.

This is the operator used in the Whittle--Mat\'ern SPDE \eqref{eq:Matern_spde}.
Specifically, a Whittle--Mat\'ern field on $\Gamma$ with parameters $(\kappa,\tau,\alpha)$ is the centered Gaussian field obtained as the solution to $L^{\alpha/2}(\tau u)=\mathcal W,$ where $\mathcal W$ denotes Gaussian white noise on $L_2(\Gamma)$ and $L=\kappa^2-\Delta_\Gamma$ is the shifted Kirchhoff Laplacian.
The fractional power is understood in the spectral sense.
For $\alpha>1/2$, this equation has a unique solution in $L_2(\Gamma)$, $\mathbb P$-almost surely, and the solution admits a continuous modification; see \citet[][Proposition~1]{BSW2022}.

In the rest of the paper, we use this construction through its Markov and finite-dimensional representations for integer $\alpha$.
In particular, for $\alpha\in\mathbb N$ we use the edge-wise vectors
$\mv{u}_e(t)=\bigl[u_e(t),u_e'(t),\ldots,u_e^{(\alpha-1)}(t)\bigr]^\top,$
where all derivatives are taken with respect to the edge coordinate $t$.
The precise spectral and weak formulation of the SPDE is only needed in the proof of our main result (Theorem~\ref{thm:ReprTheoremEdge_Refined}) and is recalled there.

Based on the parameters $(\kappa,\tau,\alpha)$ we define the variance parameter 
\begin{equation}\label{eq:tau_microergodic}
\sigma^2 = \frac{\Gamma(\alpha-1/2)}{\tau^2  \kappa^{2\alpha-1} 2\sqrt{\pi}\Gamma(\alpha)},
\end{equation}
which is the marginal variance related to \eqref{eq:matern_cov} for $p=1$, i.e., $\sigma^2 = \varrho_M(0)$. Note, however, that a Whittle--Mat\'ern field in general is non-stationary, and its marginal variances are different from $\sigma^2$ close to vertices of degree different from two.

\section{The bridge representation}\label{sec:bridge}
The definition of the Whittle--Mat\'ern fields is difficult to use for simulation and inference. \citet{split1} derived an alternative representation of these fields with $\alpha\in\mathbb{N}$ by conditioning certain independent Gaussian processes defined on the edges to satisfy the Kirchhoff vertex conditions, which facilitates exact likelihood evaluation and prediction by adding the locations of interest as vertices in the graph and using the fact that this does not change the model. 

In this section, we derive the alternative bridge representation. 
In the next section, we show that this representation can be used for likelihood-based inference and prediction without adding the locations of interest as vertices, reducing the computational cost. 
To state the bridge representation, we begin by introducing the Whittle--Mat\'ern bridge process. To distinguish between local and global objects, we use $x$ for a process defined only on a single edge and $u$ for the corresponding process on the whole graph.

\begin{definition}\label{def:WMB}
Let $x_\alpha$ be a centered Gaussian process on an interval $e = [0, \ell]$, with covariance function \eqref{eq:matern_cov} with $p=1$,  
evaluated at $h=t_1-t_2$ for $t_1,t_2\in e$, where $\nu=\alpha-\frac{1}{2}$ and $\alpha \in \mathbb{N}$.
Then, the Whittle--Mat\'ern bridge process with parameters $(\kappa,\tau,\alpha)$ on $e$, $x_{B,e}$,  
with respect to $x_\alpha$ is obtained by conditioning $x_\alpha$ on $\{\mv{x}_{\alpha}(0)  =0, \mv{x}_{\alpha}(\ell) =0\}$, that is, ${x_{B,e}(t) = x_\alpha(t) | \{\mv{x}_{\alpha}(0)  =0, \mv{x}_{\alpha}(\ell) =0\}},$ 
where $ \mv{x}_{\alpha}(t)= [x_\alpha(t),x_\alpha^{(1)}(t),\ldots , x_\alpha^{(\alpha-1)}(t)]$.
\end{definition}

This process has the properties shown in the following proposition, where the operator $B^{\alpha}$ maps a sufficiently smooth function on an interval $[0,\ell]$ to its values and $\alpha-1$ first derivatives at $0$ and $\ell$, 
$B^{\alpha} u =  \left[u(0),u^{(1)}(0),\ldots,u^{(\alpha-1)}(0),u(\ell),u^{(1)}(\ell),\ldots,u^{(\alpha-1)}(\ell) \right]^\top$. 

\begin{proposition}\label{prp:Whittle_Matern_bridge_prop}
Let $x_{B,e}(\cdot)$  be a Whittle--Mat\'ern bridge process on the interval 
${e = [0, \ell]}$. Then,
 $x_{B,e}(\cdot)$
is $\alpha-1$ times differentiable, its derivatives 
are  continuous,
and $B^{\alpha}x_{B,e}(\cdot) = \mv{0}$. Further, $x_{B,e}(\cdot)$
has covariance function 
\begin{equation}\label{eq:cov_func_whittle_matern_bridge}
	 r_{e}(t_1,t_2) = \varrho_M(t_1-t_2) - 
		\begin{bmatrix}
			\mv{r}_1(t_1,0) & \mv{r}_1(t_1,\ell) 
		\end{bmatrix}
		\begin{bmatrix}
			\mv{r}(0,0) & \mv{r}(0,\ell) \\
			\mv{r}(\ell,0) & \mv{r}(\ell,\ell)
		\end{bmatrix}^{-1}
		\begin{bmatrix}
			\mv{r}_1(t_2,0)^\top \\
			\mv{r}_1(t_2,\ell)^\top
		\end{bmatrix},
\end{equation} 		
which is well-defined, where $t_1,t_2\in e$ and $\mv{r}(s,t)$ is the matrix given by
\begin{equation}\label{eq:R_matrix_edge_repr}
	\mv{r} : \mathbb{R} \times \mathbb{R} \to \mathbb{R}^{\alpha \times \alpha}, 
	\quad \mv{r}(t_1,t_2) = \left[ \frac{\pd^{i-1}}{\pd t_1^{i-1}}\frac{\pd^{j-1}}{\pd t_2^{j-1}}\varrho_M(t_1-t_2)\right]_{i,j\in\{1,2,\ldots, \alpha\}},
\end{equation}
with $\varrho_M(\cdot)$ given in \eqref{eq:matern_cov}, and $\mv{r}_1(\cdot,\cdot)$ denotes the first row in $\mv{r}(\cdot,\cdot)$.
\end{proposition}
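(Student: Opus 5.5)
The plan is to deduce everything from properties of the stationary Matérn process $x_\alpha$ on $\mathbb{R}$ with $\nu=\alpha-\tfrac12$, followed by a standard Gaussian conditioning argument.

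First, regularity of $x_\alpha$. Its spectral density is proportional to $(\kappa^2+\omega^2)^{-\alpha}$, so $\int \omega^{2k}(\kappa^2+\omega^2)^{-\alpha}\,d\omega<\infty$ for $k\le \alpha-1$. Hence $\varrho_M$ is $2(\alpha-1)$ times differentiable at the origin, and $x_\alpha$ is $\alpha-1$ times mean-square differentiable. For the sample-path statement I would use that $\varrho_M^{(2\alpha-2)}$ is, up to sign, the covariance of $x_\alpha^{(\alpha-1)}$. This function has spectral density proportional to $\omega^{2\alpha-2}(\kappa^2+\omega^2)^{-\alpha}$ and is Lipschitz near $0$, because $\varrho_M(h)$ is $e^{-\kappa|h|}$ times a polynomial in $|h|$ when $\nu$ is a half-integer. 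Kolmogorov's continuity criterion, applied to the Gaussian increments of $x_\alpha^{(\alpha-1)}$, then gives a modification with continuous derivatives up to order $\alpha-1$. The joint vector $(x_\alpha(t),\mv{x}_\alpha(0),\mv{x}_\alpha(\ell))$ is Gaussian, with cross-covariances $\Cov(x_\alpha^{(i)}(s),x_\alpha^{(j)}(t))=\pd_s^i\pd_t^j\varrho_M(s-t)$, which are exactly the entries of $\mv{r}$ in \eqref{eq:R_matrix_edge_repr}.

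The main obstacle is well-definedness, namely invertibility of the $2\alpha\times2\alpha$ matrix $\Sigma=\Cov(B^\alpha x_\alpha)$. Suppose a nonzero $c\in\mathbb{R}^{2\alpha}$ satisfies $c^\top\Sigma c=0$. In the spectral representation this means
\begin{equation*}
\int \Bigl|\sum_{j=0}^{\alpha-1}\bigl(a_j (i\omega)^j + b_j (i\omega)^j e^{i\omega\ell}\bigr)\Bigr|^2 (\kappa^2+\omega^2)^{-\alpha}\,d\omega=0 .
\end{equation*}
The spectral density is strictly positive, so the function $p(\omega)+q(\omega)e^{i\omega\ell}$ would have to vanish identically, where $p$ and $q$ are polynomials of degree at most $\alpha-1$. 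Since $\ell>0$, this forces $p=q=0$: letting $|\omega|\to\infty$ along the real line, $e^{i\omega\ell}$ oscillates and cannot equal the rational function $-p/q$ unless both vanish. Hence $c=0$ and $\Sigma$ is positive definite. An alternative route would be to use the reproducing kernel Hilbert space of $x_\alpha$, which is $H^\alpha(\mathbb{R})$ with an equivalent norm. The functionals $f\mapsto f^{(j)}(0)$ and $f\mapsto f^{(j)}(\ell)$ for $j<\alpha$ are linearly independent on that space, as one can check with bump functions.

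Finally I would define the bridge concretely as
\begin{equation*}
x_{B,e}(t)=x_\alpha(t)-\begin{bmatrix}\mv{r}_1(t,0)&\mv{r}_1(t,\ell)\end{bmatrix}\Sigma^{-1}B^\alpha x_\alpha .
\end{equation*}
This is a version of the conditional process given $B^\alpha x_\alpha=\mv{0}$: the subtracted term is the conditional mean, and the residual is independent of $B^\alpha x_\alpha$ by Gaussianity. Its covariance follows from the standard Schur-complement computation and is exactly \eqref{eq:cov_func_whittle_matern_bridge}. The correction term is a finite linear combination of the functions $t\mapsto\pd_s^j\varrho_M(t-s)|_{s\in\{0,\ell\}}$. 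These are $C^\infty$ away from $s$ and at least $C^{2\alpha-2-j}\supseteq C^{\alpha-1}$ at $s$, so $x_{B,e}$ inherits $\alpha-1$ continuous derivatives from $x_\alpha$. Moreover, $B^\alpha$ of the correction equals $\Sigma\Sigma^{-1}B^\alpha x_\alpha=B^\alpha x_\alpha$. This gives $B^\alpha x_{B,e}=\mv{0}$ identically, and hence also under conditioning.
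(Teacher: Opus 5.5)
Your proposal is correct and follows essentially the same route as the paper. The paper gets the covariance by Gaussian conditioning, and gets the regularity and $B^\alpha x_{B,e}=\mv{0}$ by writing the Mat\'ern process as a bridge plus a smooth boundary interpolant, citing \citet[Theorem 2]{split1}; that interpolant is exactly the residual construction $x_\alpha(t)-[\mv{r}_1(t,0)\ \mv{r}_1(t,\ell)]\,\Cov(B^\alpha x_\alpha)^{-1}B^\alpha x_\alpha$ you write down. Your version is more self-contained: your spectral argument that $p(\omega)+q(\omega)e^{i\omega\ell}\equiv 0$ forces $p=q=0$ proves the positive-definiteness of $\Cov(B^\alpha x_\alpha)$ behind the ``well-defined'' claim, which the paper's proof leaves implicit.
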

The matrix-valued function \eqref{eq:R_matrix_edge_repr} is the covariance function of the multivariate process $[x(t),x'(t),x''(t),\ldots, x^{(\alpha-1)}(t)]$ if $x$ 
is a centered Gaussian process on $\mathbb{R}$ with a Mat\'ern covariance function $\varrho_M(\cdot)$.

We now present the new representation for a Whittle--Mat\'ern field restricted to a single edge of $\Gamma$. 
This decomposes the field into two independent components: a bridge process and a low-rank process that is defined on the vertices and then interpolated along the edge.

\begin{theorem}\label{thm:ReprTheoremEdge_Refined}
Let $u$ be a Whittle--Mat\'ern field on a compact metric graph $\Gamma$, obtained as a solution to \eqref{eq:Matern_spde} for $\alpha\in\mathbb{N}$.  For any $e\in\mathcal{E}$, we have the following representation of $u_e$ ($u$ restricted to the edge $e$): 
$u_e(t) = u_{B,e}(t)  + \mv{S}_{e}(t) B^\alpha u_e$, $t\in [0,\ell_e],$
where 
\begin{equation}\label{eq:edge_repr_Solution}
	\mv{S}_{e}(t) = \begin{bmatrix}
		\mv{r}_1(t,0) & \mv{r}_1(t,\ell_e) 
	\end{bmatrix}
	\begin{bmatrix}
		\mv{r}(0,0) & \mv{r}(0,\ell_e) \\
		\mv{r}(\ell_e,0) & \mv{r}(\ell_e,\ell_e)
	\end{bmatrix}^{-1},
\end{equation}
$\mv{r}(s,t)$ is the matrix given by \eqref{eq:R_matrix_edge_repr},
and $\mv{r}_1(\cdot,\cdot)$ denotes the first row in $\mv{r}(\cdot,\cdot)$. 
Finally, $u_{B,e}(t)$ is a Whittle--Mat\'ern bridge process on $[0,\ell_e]$, independent of $B^\alpha u_e$. 
\end{theorem}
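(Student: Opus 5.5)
Define $u_{B,e}(t) := u_e(t) - \mv{S}_e(t)B^\alpha u_e$. The representation then holds by construction. What remains is to show that $u_{B,e}$ is independent of $B^\alpha u_e$ and has the law of a Whittle--Mat\'ern bridge, i.e.\ the covariance \eqref{eq:cov_func_whittle_matern_bridge} of Proposition~\ref{prp:Whittle_Matern_bridge_prop}. Since everything is jointly Gaussian and centered, it suffices to compute two things: the cross-covariance between $u_{B,e}(t)$ and $B^\alpha u_e$, and the covariance of $u_{B,e}$.

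The key input is the edge representation of \citet{split1}, which rests on the Markov property from \citet{BSW_Markov}. Restricted to the interior of $e$, conditionally on the boundary data $B^\alpha u_e$ and on the field outside $e$, the process $u_e$ behaves like a stationary Mat\'ern process $x_\alpha$ on $[0,\ell_e]$ conditioned on its boundary vector. Concretely, $u_e$ can be written as $x_\alpha$ (with covariance $\varrho_M$ on the edge) conditioned on linear constraints that involve only $B^\alpha x_\alpha$ and the other edges' boundary vectors.

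To make this precise I would use the weak formulation. For $\alpha\in\mathbb N$, the precision operator $L^\alpha$ is local, so the Gaussian measure of $u$ has Cameron--Martin space $H^\alpha(\Gamma)$ with vertex conditions. Its quadratic form splits as a sum over edges of the $\mathbb R$-Mat\'ern form on $[0,\ell_e]$ plus terms that depend only on the vertex traces $B^\alpha u_e$. The step is to verify that the Kirchhoff Laplacian's form $\|(\kappa^2-\Delta)^{\alpha/2}f\|^2$ equals $\sum_e \mathcal Q_e(f_e)$ minus corrections that are quadratic forms in boundary traces only; this is the statement that the Mat\'ern form on an interval is the $\mathbb R$-form restricted, up to boundary terms. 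Consequently, in the Cameron--Martin space, functions $f$ with $B^\alpha f_e = 0$ supported on $e$ are orthogonal to the functions determined by the boundary data, since the latter are $\mv{S}_e(t)$-interpolants solving $L^\alpha$-harmonic equations on $e$. I would prove this by integration by parts, using that $\mv{S}_e(t)\mv{b}$ solves $(\kappa^2-\partial_t^2)^\alpha g=0$ on $(0,\ell_e)$ with $B^\alpha g=\mv b$. This holds because each $\mv r_1(t,s)$ at fixed boundary $s$ is a combination of derivatives of $\varrho_M$, which is a Green's function of $(\kappa^2-\partial^2)^\alpha$ on $\mathbb R$.

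This orthogonality yields independence of $u_{B,e}$ from $B^\alpha u_e$, and indeed from the whole field off $e$. The covariance of $u_{B,e}$ is then that of the Gaussian measure with Cameron--Martin norm $\mathcal Q_e$ on $H^\alpha_0(0,\ell_e)$. The same holds for $x_\alpha$ conditioned on $B^\alpha x_\alpha=0$, because conditioning a Gaussian on linear functionals restricts the Cameron--Martin space to their kernel with the same norm. Hence $u_{B,e}$ and the bridge $x_{B,e}$ have equal laws, and the covariance is \eqref{eq:cov_func_whittle_matern_bridge} by Gaussian conditioning. Invertibility of the boundary matrix is taken from Proposition~\ref{prp:Whittle_Matern_bridge_prop}.

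The main obstacle is the form decomposition: showing that the graph form restricted to functions vanishing to order $\alpha-1$ at both ends of $e$ coincides with the whole-line Mat\'ern form $\mathcal Q_e$, with no boundary contributions. For such $f$, extending by zero gives $f\in H^\alpha(\Gamma)$ in the operator domain sense. Expanding $(\kappa^2-\Delta)^{\alpha}$ binomially and integrating by parts, all boundary terms vanish because the derivatives of order below $\alpha$ vanish, and the remaining expression is $\int_{\mathbb R}|(\kappa^2+\omega^2)^{\alpha/2}\hat f|^2$. Care is needed for odd $\alpha$, where the form is defined via $L^{(\alpha-1)/2}$ and one gradient; there I would handle it directly with $\|\nabla L^{(\alpha-1)/2}f\|^2+\kappa^2\|L^{(\alpha-1)/2}f\|^2$.
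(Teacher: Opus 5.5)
Your proof is correct, but it takes a different route from the paper's.

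\textbf{How the paper argues.} The paper works almost entirely by citation. It takes from \citet[Theorem~6]{BSW_Markov} the existence of an edge decomposition $u_e=\widetilde u_{B,e}+\widetilde{\mv S}_e(t)B^\alpha u_e$, with $\widetilde u_{B,e}$ independent of $B^\alpha u_e$. It takes $\widetilde{\mv S}_e=\mv S_e$ from \citet[Lemma~1]{split1}. It then identifies the law of $\widetilde u_{B,e}$ by matching Cameron--Martin spaces, $\mathcal H_{0,I}(e)\cong\dot H^\alpha_0(e)\cong H_0^\alpha(e)$, against the bridge's space $(H_0^\alpha(e),(\cdot,\cdot)_{\alpha,e})$ from \citet[Lemma~2]{split1}.

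\textbf{How you argue.} You build the decomposition yourself from the orthogonal splitting of $\dot H^\alpha(\Gamma)$ into $H_0$ and its complement. Here $H_0$ is the closure of smooth functions compactly supported in the interior of $e$. You explain why the Mat\'ern interpolation matrix appears: the entries of $[\mv r_1(t,0),\mv r_1(t,\ell_e)]$ are $(\kappa^2-\partial_t^2)^\alpha$-harmonic on $(0,\ell_e)$ because $\varrho_M$ is a Green's function, and their $B^\alpha$-data is exactly the block matrix being inverted. You then match Cameron--Martin spaces on both sides.

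\textbf{What each route buys.} Yours is transparent, and it gives a stronger conclusion at no extra cost. The representers of point values and boundary data on other edges also lie in $H_0^\perp$. So you get independence of $u_{B,e}$ from the whole field off $e$, and mutual independence of the bridges. The paper only obtains this in Corollary~\ref{cor:bridge_representation}, through a further appeal to \citet{BSW_Markov}. The paper's route buys brevity by outsourcing the functional-analytic facts.

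\textbf{Points to state explicitly.} A few routine steps should be written out to make your version complete.
\begin{itemize}
\item You need $\dot H^\alpha(\Gamma)\hookrightarrow\widetilde H^\alpha(\Gamma)$ and $C^{\alpha-1}$ sample paths, imported from \citet{BSW2022,BSW_Markov}. This makes $B^\alpha$ continuous on the Cameron--Martin space, so it has representers.
\item Your integration-by-parts step shows that functions harmonic on $e$ are orthogonal to $H_0$. The direction you actually need is the converse: every element of $H_0^\perp$ is harmonic on $e$. That is what places the representer of $h\mapsto h_e(t)-\mv S_e(t)B^\alpha h_e$ in $H_0$, and hence gives independence. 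It follows by testing against $\phi\in C_c^\infty$ in the interior of $e$, for which the inner product reduces to $\tau^2\int_e(L_e^\alpha\phi)g$, and then using ODE regularity.
\item On the bridge side, the Cameron--Martin norm of $x_\alpha$ restricted to $[0,\ell_e]$ is a minimal-extension norm. So ``the same norm on the kernel of $B^\alpha$'' equals $\mathcal Q_e$ only after you check that the zero extension $\widetilde h$ of $h\in H_0^\alpha(e)$ is the minimal extension. This holds because, with $\mathcal Q_{\mathbb R}$ the whole-line form, $\mathcal Q_{\mathbb R}(\widetilde h,g)=0$ for every $g\in H^\alpha(\mathbb R)$ vanishing on $[0,\ell_e]$. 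To see this, approximate $h$ by compactly supported smooth functions.
\item The separate treatment of odd $\alpha$ is unnecessary. For $\phi\in C_c^\infty$ in the interior of $e$ one has $\phi\in\mathcal D(L^\alpha)$ and $\|L^{\alpha/2}\phi\|^2=(\phi,L_e^\alpha\phi)_{L_2(e)}$ directly. The identification then extends to $H_0$ by density.
\end{itemize}
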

This representation shows that, given a Whittle--Mat\'ern field $u$ on a compact metric graph $\Gamma$, the conditional distribution of the field on any edge $e$, given the boundary data $B^\alpha u$, is independent of the rest of the graph. Moreover, this distribution depends on the geometry of the graph only through the edge length $\ell_e$.
As a corollary, we also have the following representation of the Whittle--Mat\'ern field in terms of bridge processes.

\begin{Corollary}\label{cor:bridge_representation}
Let $u$ be a Whittle--Mat\'ern field on a compact metric graph $\Gamma$, obtained as a solution to \eqref{eq:Matern_spde} for $\alpha\in\mathbb{N}$. 
Then, $u$ admits the bridge representation
\begin{equation}\label{eq:bridge_representation}
	u(s) = u_\Gamma(s)  + u_{B,e}(t), \quad s=(e,t)\in\Gamma,
\end{equation}
where  $\{ u_{B,e}\}_{e\in\mathcal{E}}$ are independent Whittle--Mat\'ern bridge processes defined on each edge. 
Further, for $s = (e,t)\in\Gamma$, we have $u_\Gamma(s) =  \mv{S}_{e}(t) \mv{D}_{e} \mv{U}$, where 
\begin{align}\label{eq:dens_u2}
   \mv{U} =  [\mv{u}_{e_1}(0)^\top,
   \mv{u}_{e_1}(\ell_{e_1})^\top,
   \mv{u}_{e_2}(0)^\top,
   \mv{u}_{e_2}(\ell_{e_2})^\top,
   \ldots,
	\mv{u}_{e_{|\mathcal{E}|}}(0)^\top,
   \mv{u}_{e_{|\mathcal{E}|}}(\ell_{e_{|\mathcal{E}|}})^\top]^\top,
\end{align}
where
 $\mv{D}_{e}$ maps $\mv{U}$ to $(\mv{u}_e(0),\, \mv{u}_e(\ell_{e}))^\top$, 
 and $\mv{S}_e(t)$ is given in \eqref{eq:edge_repr_Solution}.
\end{Corollary}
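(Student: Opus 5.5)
The corollary should follow from Theorem~\ref{thm:ReprTheoremEdge_Refined} applied edge by edge, together with one additional fact: the bridge components on different edges are independent of each other, not only of the boundary data. The plan has three steps.

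\textbf{Step 1: define the pieces.} For each $e\in\mathcal E$ set
\[
u_{B,e}(t) := u_e(t) - \mv{S}_e(t) B^\alpha u_e .
\]
By Theorem~\ref{thm:ReprTheoremEdge_Refined}, $u_{B,e}$ is a Whittle--Mat\'ern bridge process and is independent of $B^\alpha u_e$.

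Since $B^\alpha u_e = (\mv u_e(0),\mv u_e(\ell_e))^\top = \mv D_e\mv U$, we have $\mv S_e(t) B^\alpha u_e = u_\Gamma(s)$ for $s=(e,t)$. This gives \eqref{eq:bridge_representation} immediately. The only point to check is well-definedness at vertices: there $\mv S_e(0)$ and $\mv S_e(\ell_e)$ reproduce the boundary values, because $\mv r_1(0,\cdot)$ is a row of the Gram block. Also $u_{B,e}$ vanishes at the endpoints, so both sides agree for every coordinate representation of a vertex.

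\textbf{Step 2: mutual independence of the bridges.} This is the genuine content beyond the theorem, and I expect it to be the main obstacle. Because everything is jointly Gaussian, it suffices to show
\[
\Cov\bigl(u_{B,e}(t),\,u_{B,e'}(t')\bigr)=0 \quad\text{for } e\neq e',
\]
and that $\{u_{B,e}\}_e$ is independent of $\mv U$, not merely of each $\mv D_e\mv U$.

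I would derive this from the Markov property of the field for integer $\alpha$ \citep{BSW_Markov}. Conditional on $\mv U$, the collection of all derivative boundary data, the restrictions of $u$ to distinct edge interiors are conditionally independent. Moreover, the conditional law of $u_e$ given $\mv U$ depends only on $\mv D_e\mv U$. By the remark following Theorem~\ref{thm:ReprTheoremEdge_Refined}, this conditional law is that of $\mv S_e(t)\mv D_e\mv U$ plus a bridge, with the bridge law independent of the conditioning value.

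Then $u_{B,e}$ is a function of $u_e$ and $\mv U$ whose conditional law given $\mv U$ is fixed, namely the bridge law. Hence $u_{B,e}$ is independent of $\mv U$. Conditional independence across edges given $\mv U$ then yields joint independence of $\{u_{B,e}\}_e$ and $\mv U$.

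If the Markov argument is not available in this form, I would fall back on the proof of Theorem~\ref{thm:ReprTheoremEdge_Refined} itself. That proof presumably identifies $u_{B,e}$ via the weak formulation, as the part of $u$ driven by white noise restricted to the interior of $e$ with Dirichlet-type boundary data. Independence would then follow from independence of white noise on disjoint sets. A third route would use the construction of \citet{split1}, in which $u$ is obtained by conditioning independent edge processes on the Kirchhoff constraints. Those constraints involve only the endpoint vectors, so the edge-wise bridge parts of the unconditioned processes remain unaffected and mutually independent.

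\textbf{Step 3: conclude.} With $\{u_{B,e}\}_e$ independent of each other and of $\mv U$, the decomposition $u=u_\Gamma+u_{B,e}$ on each edge is exactly the claimed bridge representation.
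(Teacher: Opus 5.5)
Your proposal is correct and follows essentially the same route as the paper: apply Theorem~\ref{thm:ReprTheoremEdge_Refined} on each edge, identify $B^\alpha u_e=\mv{D}_e\mv{U}$, and use the Markov property of \citet{BSW_Markov} (conditional independence of different edges given $\mv{U}$) to get mutual independence of the bridges. Your Step~2 is more explicit than the paper, which asserts directly that $u_{B,e}$ is independent of the full vector $\mv{U}$ even though Theorem~\ref{thm:ReprTheoremEdge_Refined} alone only gives independence of $B^\alpha u_e$. Both arguments rely on the same input from \citet{BSW_Markov} for this upgrade, namely that the conditional law of $u_e$ given $\mv{U}$ depends only on $\mv{D}_e\mv{U}$.
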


With this result, we thus have a representation of $u$ in terms of a low-rank process $u_{\Gamma}$ which is determined by the values of $u$ at the vertices, and independent bridge processes on the edges.
Figure~\ref{fig:bridge_illustration} illustrates the representation
for $\alpha=1$ on a simple metric graph, where the components are shown in
separate panels as elevations above the graph: the low-rank component
$u_\Gamma$, which interpolates the vertex values and is smooth in the
interiors of the edges, the independent bridge processes, which vanish at all
vertices, and their sum, which is a realization of the Whittle--Mat\'ern
field.

\begin{figure}[t]
	\centering
	\begin{subfigure}[t]{0.32\linewidth}
		\centering
		\includegraphics[width=\linewidth]{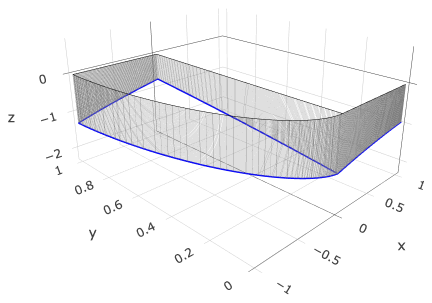}
		\caption{low-rank component $u_\Gamma$}
	\end{subfigure}
	\hfill
	\begin{subfigure}[t]{0.32\linewidth}
		\centering
		\includegraphics[width=\linewidth]{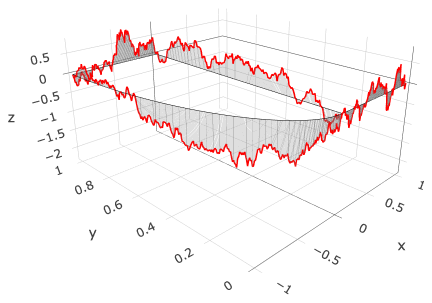}
		\caption{bridge processes $u_{B,e}$}
	\end{subfigure}
	\hfill
	\begin{subfigure}[t]{0.32\linewidth}
		\centering
		\includegraphics[width=\linewidth]{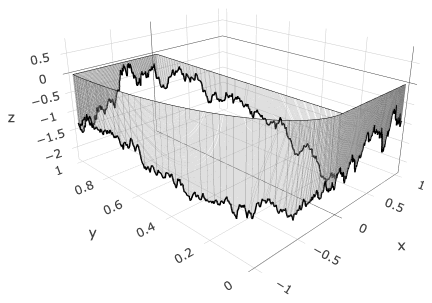}
		\caption{field $u=u_\Gamma+u_{B,e}$}
	\end{subfigure}
	\caption{Illustration of the bridge representation
	\eqref{eq:bridge_representation} for $\alpha=1$, with $\kappa=1/2$ and
	$\sigma=1$ (cf.~\eqref{eq:tau_microergodic}), on a simple metric graph: a realization of the low-rank
	component $u_\Gamma$ (a), the independent bridge processes $u_{B,e}$ (b),
	and the resulting Whittle--Mat\'ern field (c). The bridge processes vanish at the vertices, where the field is
	determined by $u_\Gamma$.}
	\label{fig:bridge_illustration}
\end{figure}

\section{Likelihood evaluation}\label{sec:likandpred}\label{sec:likelihood}
In this section we demonstrate how the bridge representation enables efficient likelihood evaluation. 
 This can offer a computational advantage compared to the approach of \citet{split1} where observation locations are added as vertices.
  However, both methods are applicable for general metric graphs, and which one is best depends on the structure of the graph and the observation locations.

Throughout what follows, $n$ and $m$ denote positive integers and $\mathcal{X}$ a non-empty set (below, $\mathcal{X}$ is an interval or $\Gamma$). Given a vector $\mv{V} \in \mathcal{X}^n$ and a set of indices $\mathcal{A}\subset\{1,\ldots,n\}$, $\mv{V}_{\mathcal{A}} = (\mv{V}_j)_{j\in\mathcal{A}}$ denotes the corresponding subvector. For a matrix $\mv{M}\in\mathbb{R}^{n\times m}$ with $(i,j)$th entry $\mv{M}_{i,j}$, and index sets $\mathcal{A}\subset\{1,\ldots,n\}$ and $\mathcal{B}\subset\{1,\ldots,m\}$, we write $\mv{M}_{\mathcal{A}\mathcal{B}} = [\mv{M}_{i,j}]_{i\in\mathcal{A},j\in\mathcal{B}}$, $\mv{M}_{\mathcal{A},} := \mv{M}_{\mathcal{A},\{1,\ldots,m\}}$, and $\mv{M}_{,\mathcal{B}} := \mv{M}_{\{1,\ldots,n\},\mathcal{B}}$. Finally, functions are applied to vectors of arguments componentwise: for $f:\mathcal{X}\to \mathbb{R}$ and $\boldsymbol{t} = (t_1,\ldots,t_n)\in \mathcal{X}^n$, $f(\boldsymbol{t}):= (f(t_1),\ldots,f(t_n))$; for $\boldsymbol{f}:\mathcal{X}\to \mathbb{R}^{m}$, $\boldsymbol{f}(\boldsymbol{t})$ is the $n\times m$ matrix with $(i,j)$th entry $\boldsymbol{f}_j(t_i)$; and for $f: \mathcal{X}\times \mathcal{X}\to \mathbb{R}$, $f(\boldsymbol{t}_1, \boldsymbol{t}_2)$ is the $n\times m$ matrix with $(i,j)$th entry $f(t_{1i}, t_{2j})$, where $\boldsymbol{t}_1 \in \mathcal{X}^n$ and $\boldsymbol{t}_2 \in \mathcal{X}^m$.

Consider observations $\mv{y} = [y_1, \dots, y_n]^\top$ collected at locations $s_1,\dots,s_n \in \Gamma$, from 
 $\mv{Y} = u(\mv{s}) + \mv{\varepsilon}$, where $\mv{\varepsilon} \sim \pN(\mv{0}, \sigma_{\varepsilon}^2 \mv{I}_n)$ is independent of $u$, $\mv{s} = (s_1,\ldots,s_n)$, and $\mv{I}_n$ stands for the $n\times n$ identity matrix. Corresponding methods for the noise-free case are presented in Appendix~\ref{app:direct}.
To exploit the bridge representation, we partition the observations according to the edges of the graph. 
For each edge $e \in \mathcal{E}$, let $n_e$ denote the number of observations located on $e$. 
We write $\mv{Y}_e$ for the subvector of $\mv{Y}$ corresponding to locations $s_i \in e$ (preserving the original ordering), 
$\mv{y}_e$ for the realized observations, 
$\mv{\varepsilon}_e$ for the corresponding noise components, 
and $\mv{t}_e$ for the vector of positions along $e$.
With $\mv{U}$ as defined in \eqref{eq:dens_u2}, the vectors $\{\mv{Y}_e\}_{e \in \mathcal{E}}$ are conditionally independent given $\mv{U}$, with
 $\mv{Y}_e \mid \mv{U} \sim \pN\left(\mv{S}_e(\mv{t}_e)\mv{D}_e\mv{U},\, \mv{\Sigma}_e\right)$. 
Here, 
$\mv{D}_e$ is defined in Corollary~\ref{cor:bridge_representation}
and 
$(\mv{\Sigma}_e)_{ij} =
\sigma_{\varepsilon}^2 \mathbb{I}(i=j) + r_{e}(t_i, t_j)$,
where  $r_{e}$ is the covariance function \eqref{eq:cov_func_whittle_matern_bridge}.
We can utilize this to create a numerically 
efficient form of the log-likelihood. We begin by considering the case ${\alpha=1}$, where $\mv{U}$ does not contain  derivatives. 

\subsection{The case $\alpha=1$}

In view of the vertex conditions, which ensure continuity of $u(\cdot)$, we have that 
$
\mv{U}_v = [u(v_1), \ldots, u(v_{|\mathcal{V}|})]^\top
$ 
is well-defined, and for a vertex $v$ one can use any of the $\deg(v)$ coordinate representations in $\Gamma$ that correspond to $v$.
We can then replace the vector $\mv{U}$ by the vector $\mv{U}_v$ in the bridge representation and use that this vector has an explicit precision matrix $\mv{Q}_v$, 
see \citet{split1}. 
The only modification to the bridge form is that we now let $\mv{D}_e$ be a map from $\mv{U}_v$ to $(\mv{u}_e(0),\, \mv{u}_e(\ell_e))^\top$. Then introducing
 the conditional expectation
 \begin{align}
	\label{eq:mu_def}
 \mv{\mu}= \mathbb{E}\left[\mv{U}_v \mid \mv{Y}=\mv{y} \right] = \widetilde{\mv{Q}}_v^{-1}\sum_{e\in\mathcal{E}_y}\mv{D}_e^\top \mv{S}_e(\mv{t}_e)^\top \mv{\Sigma}_e^{-1}\mv{y}_e,
 \end{align}
where $\widetilde{\mv{Q}}_v = \mv{Q}_v + \sum_{e\in\mathcal{E}_y}\mv{D}_e^\top \mv{S}_e(\mv{t}_e)^\top \mv{\Sigma}_e^{-1}\mv{S}_e(\mv{t}_e)\mv{D}_e$ and $\mathcal{E}_y$ is the set of edges with observations.
This gives that the log-likelihood of $\mv{y}$, up to an additive constant, is
\begin{align}
	\label{eq:lik1}
2l(\theta;\mv{y},\Gamma,\mv{s}) 
&=  \log |\mv{Q}_v| - \log|\widetilde{\mv{Q}}_v| 
- \sum_{e\in\mathcal{E}_y}\log |\mv{\Sigma}_e|  
+ \mv{\mu}^{\top} \widetilde{\mv{Q}}_v \mv{\mu} 
- \sum_{e\in\mathcal{E}_y}\mv{y}_e^{\top}\mv{\Sigma}_e^{-1}\mv{y}_e,
\end{align}
where $\theta=(\sigma_{\varepsilon},\tau,\kappa)$.
It should be noted that this formula also works for the case of direct observations (i.e., when $\sigma_{\varepsilon} = 0$), as long as one does not have observations exactly at the vertices (see Appendix~\ref{app:direct} for details of the case with observations at vertices). The advantage of this bridge formulation lies in the dimension of the matrix $\mv{Q}_v$, which has $|\mathcal{V}|$ rows instead of $|\mathcal{V}| + n$ rows in the expanded-graph approach. A drawback is that the matrices $\mv{\Sigma}_e$ are dense, leading to cubic computational complexity in evaluating their log-determinants. In the extreme case when $n \gg |\mathcal{V}|$, the expanded-graph approach has an $\mathcal{O}(n)$ cost, whereas the bridge method has an $\mathcal{O}(\sum_{e\in\mathcal{E}} n_e^3)$ cost, where $n_e$ is the number of observations on the edge $e$. Nevertheless, in practice, the number of observations on any individual edge is typically small, even if the total number of observations is large. Thus, this method often proves substantially more computationally efficient. Another advantage is that it tends to be more numerically stable, especially if one has close observations on the same edge.

\subsection{The general case $\alpha>1$}
To derive the bridge-based likelihood for $\alpha>1$, we build upon the framework by \citet{bolin2021efficient}, similarly to the approach in \citet{split1}. These methods facilitate efficient likelihood evaluation in the lower dimensional space that is obtained by applying the Kirchhoff conditions to $\mv{U}$. However, the methods are not directly applicable as they assume diagonal measurement errors, i.e., they would require $\mv{\Sigma}_e$ in the distribution of $\mv{y}_e|\mv{U}$ to be diagonal. Therefore, we now extend \citet[][Theorem 3]{bolin2021efficient} to our case.
	
Let $\mv{K}$ denote the  matrix such that $\mv{KU} = \mv{0}$ enforces Kirchhoff vertex conditions by imposing $k$ constraints (see Section 5.1 in \citealp{split1} for details). To handle these constraints, we follow \citet{split1} and introduce a change of basis $\mv{U}^{*} = \mv{T}\mv{U}$, where $\mv{T}$ is an orthogonal matrix (so that $\mv{U}=\mv{T}^\top\mv{U}^*$) constructed via a singular value decomposition of $\mv{K}$, so that the $k$ constraints from $\mv{K}$ apply solely to the first $k$ elements of $\mv{U}^{*}$, i.e., to  $\mv{U}^*_{\mathcal{C}}$, with  $\mathcal{C}=\{1,\dots,k\}$, while leaving the remaining nodes, namely $\mv{U}^{*}_{\mathcal{U}}$, unconstrained, where $\mathcal{U}=\{1,\ldots,2\alpha|\mathcal{E}|\}\setminus \mathcal{C}$. Because 
	 $\mv{Y}_{e} \mid \mv{U} = \mv{u} \sim \pN\left( \mv{S}_{e}(\mv{t}_e)\mv{D}_e\mv{u}, \mv{\Sigma}_e\right)$, we have that
	 $\mv{Y}|\mv{U} = \mv{u} \sim \pN\left(\mv{B}\mv{u}, \mv{\Sigma}\right)$,
	where $\mv{B} = [(\mv{S}_{e_1}(\mv{t}_{e_1})\mv{D}_{e_1})^\top,\ldots, (\mv{S}_{e_{|\mathcal{E}|}}(\mv{t}_{e_{|\mathcal{E}|}})\mv{D}_{e_{|\mathcal{E}|}})^\top]^\top$ is the matrix obtained by stacking the blocks $\mv{S}_{e}(\mv{t}_e)\mv{D}_{e}$ as rows and $\mv{\Sigma} = \mathrm{diag}\left(\mv{\Sigma}_{e_1},\ldots, \mv{\Sigma}_{e_{|\mathcal{E}|}}\right)$.
	
\begin{proposition}\label{Them:piAXsoft}
	Let $\mv{U}\in\mathbb{R}^m$ and $\mv{Y}\in\mathbb{R}^n$ be Gaussian random variables such that  $\mv{U} \sim \pN\left(\mv{\mu}, \mv{Q}^{-1}\right)$ and 
	$\mv{Y} | \mv{U} = \mv{u} \sim \pN\left( \mv{B}\mv{u} , \mv{\Sigma}\right)$ for some $n\times m$ matrix $\mv{B}$, where  $\mv{Q}$  and $\mv{\Sigma}$ are strictly positive-definite.  
	Further, let $\mv{K}$ be a $k \times m$ matrix of full rank and let $\mv{T}$ be the change-of-basis matrix of $\mv{K}$ discussed above.
	Define $\mv{Q}^*= \mv{T}\mv{Q}\mv{T}^\trsp$, $\mv{B}^*=\mv{B}\mv{T}^\top$, fix $\mv{b}\in \mathbb{R}^k$, and let $\mv{\mu}^*=\mv{T}\mv{\mu}$ and let $\mv{b}^*\in\mathbb{R}^k$ denote the representation of $\mv{b}$ in the new basis, i.e., the vector such that $\mv{KU}=\mv{b}$ holds if and only if $\mv{U}^*_{\mathcal{C}}=\mv{b}^*$. 
	Letting $\mathcal{C} = \{1,\ldots,k\}$ and $\mathcal{U} = \{k+1,\ldots, m\}$, the density of $\mv{Y}|\mv{KU}=\mv{b}$ is
	\begin{align*}
		\pi_{\mv{Y}|\mv{KU}}(\mv{y}|\mv{b}) = &
		\frac{|\mv{Q}^*_{\Ac\Ac}|^{\frac{1}{2}} |\mv{\Sigma}|^{-\frac12}}{\left(2\pi \right)^{\frac{n}{2}} |\widehat{\mv{Q}}^*_{\Ac\Ac} |^{\frac{1}{2}}}  
		\exp \left(-  \frac{1}{2}\left[\mv{y}^{\top}\mv{\Sigma}^{-1}\mv{y}+ \widetilde{\mv{\mu}}_{\Ac} ^{*\top} \mv{Q}^*_{\Ac\Ac}\widetilde{\mv{\mu}}^*_{\Ac}- \widehat{\mv{\mu}}_{\Ac}^{*\top}\widehat{\mv{Q}} ^*_{\Ac\Ac}  \widehat{\mv{\mu}}^*_{\Ac}\right]\right),
	\end{align*}
	where $\widehat{\mv{Q}}^*_{\Ac\Ac} = \mv{Q}^*_{\Ac\Ac}  + \left(\mv{B}^*_{,\Ac}\right)^\top \mv{\Sigma}^{-1} \mv{B}^*_{,\Ac}$, $\widehat{\mv{\mu}}^*_{\Ac} =\left(\widehat{\mv{Q}}^{*}_{\Ac\Ac} \right)^{-1}  \left(
		\mv{Q}^*_{\Ac\Ac} \widetilde{\mv{\mu}}^*_{\Ac} + \left(\mv{B}^*_{,\Ac}\right)^\top \mv{\Sigma}^{-1}\ \mv{y} \right)$, and  
	$\widetilde{\mv{\mu}}^* = 
		\left[
			(\mv{b}^*)^\top 
			(\mv{\mu}_{\Ac}^* - \left(\mv{Q}_{\Ac\Ac}^{*}\right)^{-1}  \mv{Q}^*_{\Ac\A} \left( \mv{b}^*  - \mv{\mu}^*_{\A}\right) )^\top
		\right]^\top$.
\end{proposition}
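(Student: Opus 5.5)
The plan is to work entirely in the rotated coordinates $\mv{U}^*=\mv{T}\mv{U}$, where the constraint becomes a restriction on a sub-block. Since $\mv{T}$ is orthogonal, $\mv{U}^*\sim\pN(\mv{\mu}^*,(\mv{Q}^*)^{-1})$ with $\mv{Q}^*=\mv{T}\mv{Q}\mv{T}^\top$. Moreover, $\mv{B}\mv{U}=\mv{B}\mv{T}^\top\mv{U}^*=\mv{B}^*\mv{U}^*$, so $\mv{Y}\mid\mv{U}^*=\mv{u}^*\sim\pN(\mv{B}^*\mv{u}^*,\mv{\Sigma})$. By the defining property of $\mv{b}^*$, the event $\{\mv{K}\mv{U}=\mv{b}\}$ coincides with $\{\mv{U}^*_{\A}=\mv{b}^*\}$. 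Hence the density of $\mv{Y}\mid\mv{KU}=\mv{b}$ equals the density of $\mv{Y}\mid\mv{U}^*_{\A}=\mv{b}^*$.

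\textbf{Conditional prior of the unconstrained block.} Using the standard precision-form conditioning for Gaussians, $\mv{U}^*_{\Ac}\mid\mv{U}^*_{\A}=\mv{b}^*$ is Gaussian with precision $\mv{Q}^*_{\Ac\Ac}$. Its mean is $\mv{\mu}^*_{\Ac}-(\mv{Q}^*_{\Ac\Ac})^{-1}\mv{Q}^*_{\Ac\A}(\mv{b}^*-\mv{\mu}^*_{\A})$, which is exactly $\widetilde{\mv{\mu}}^*_{\Ac}$. Thus, conditionally on the constraint, the full vector is $\mv{U}^*=\widetilde{\mv{\mu}}^*+(\mv{0}^\top,\mv{Z}^\top)^\top$ with $\mv{Z}\sim\pN(\mv{0},(\mv{Q}^*_{\Ac\Ac})^{-1})$. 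Because $\mv{Y}$ depends on $\mv{U}$ only through the conditional law $\mv{Y}\mid\mv{U}$, it is conditionally independent of the event given $\mv{U}$. Therefore
\[
\pi(\mv{y}\mid\mv{b}^*)=\int\pi(\mv{y}\mid\mv{u}^*)\,\pi(\mv{u}^*_{\Ac}\mid\mv{b}^*)\,d\mv{u}^*_{\Ac},
\qquad \mv{u}^*_{\A}=\mv{b}^*.
\]
The mean of $\mv{Y}$ here is $\mv{B}^*_{,\A}\mv{b}^*+\mv{B}^*_{,\Ac}\mv{u}^*_{\Ac}$.

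\textbf{Completing the square.} I will write the integrand as a Gaussian kernel in $\mv{u}^*_{\Ac}$. The quadratic term is $\mv{u}_{\Ac}^{*\top}(\mv{Q}^*_{\Ac\Ac}+\mv{B}^{*\top}_{,\Ac}\mv{\Sigma}^{-1}\mv{B}^*_{,\Ac})\mv{u}^*_{\Ac}=\mv{u}_{\Ac}^{*\top}\widehat{\mv{Q}}^*_{\Ac\Ac}\mv{u}^*_{\Ac}$. The linear term is $\mv{Q}^*_{\Ac\Ac}\widetilde{\mv{\mu}}^*_{\Ac}+\mv{B}^{*\top}_{,\Ac}\mv{\Sigma}^{-1}\mv{y}$, which gives the stated $\widehat{\mv{\mu}}^*_{\Ac}$ as the minimizer. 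Integrating yields the following factors:
\begin{itemize}
\item $(2\pi)^{-n/2}|\mv{\Sigma}|^{-1/2}$ from the likelihood;
\item $|\mv{Q}^*_{\Ac\Ac}|^{1/2}(2\pi)^{-(m-k)/2}$ from the prior;
\item $(2\pi)^{(m-k)/2}|\widehat{\mv{Q}}^*_{\Ac\Ac}|^{-1/2}$ from the integral.
\end{itemize}
The exponent is $-\tfrac12[\mv{y}^\top\mv{\Sigma}^{-1}\mv{y}+\widetilde{\mv{\mu}}^{*\top}_{\Ac}\mv{Q}^*_{\Ac\Ac}\widetilde{\mv{\mu}}^*_{\Ac}-\widehat{\mv{\mu}}^{*\top}_{\Ac}\widehat{\mv{Q}}^*_{\Ac\Ac}\widehat{\mv{\mu}}^*_{\Ac}]$, which matches the claim.

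\textbf{Main obstacle: the constrained component.} The one point needing care is the term $\mv{B}^*_{,\A}\mv{b}^*$. A naive computation would replace $\mv{y}$ by $\mv{y}-\mv{B}^*_{,\A}\mv{b}^*$, and the stated formula does not do this. In the application $\mv{b}=\mv{0}$, so $\mv{b}^*=\mv{0}$ and the term vanishes. I would therefore state the computation for general $\mv{b}$ with $\mv{y}$ replaced by $\mv{y}-\mv{B}^*_{,\A}\mv{b}^*$, and note that this reduces to the displayed formula when $\mv{b}=\mv{0}$, the Kirchhoff case $\mv{KU}=\mv{0}$. Alternatively, the formula may be read with $\mv{y}$ already centred by this known shift.

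The remaining checks are routine:
\begin{itemize}
\item the Schur-complement identity, i.e. that the conditional precision of a block is the corresponding block of the precision matrix;
\item positive-definiteness of $\widehat{\mv{Q}}^*_{\Ac\Ac}$, which follows since it is a positive-definite matrix plus a positive semidefinite one.
\end{itemize}
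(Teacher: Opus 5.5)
Your route is the paper's: pass to $\mv{U}^*=\mv{T}\mv{U}$, and write $\pi_{\mv{Y}\mid\mv{KU}}(\mv{y}\mid\mv{b})$ as an integral over $\mv{u}^*_{\Ac}$ of the likelihood against the precision-form conditional law of $\mv{U}^*_{\Ac}$ given $\mv{U}^*_{\A}=\mv{b}^*$. Then complete the square with $\widehat{\mv{Q}}^*_{\Ac\Ac}$ and integrate out the Gaussian kernel. Your bookkeeping of the $(2\pi)$ and determinant factors is correct.

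Your caveat about $\mv{B}^*_{,\A}\mv{b}^*$ is not a matter of interpretation, though. It is a genuine error in the statement, and the paper's proof makes it too. In the proportionality step of \eqref{eq:proof_cond1}, the cross term $-\mv{b}^{*\top}(\mv{B}^*_{,\A})^\top\mv{\Sigma}^{-1}\mv{B}^*_{,\Ac}\mv{u}^*_{\Ac}$ is discarded even though it depends on $\mv{u}^*_{\Ac}$.

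A minimal counterexample: take $m=2$, $k=1$, $n=1$, $\mv{K}=\mv{B}=(1,\,0)$, $\mv{T}=\mv{I}_2$, $\mv{Q}=\mv{I}_2$, $\mv{\mu}=\mv{0}$ and $\mv{\Sigma}=1$. Then $\mv{B}^*_{,\Ac}=0$ and $\widetilde{\mv{\mu}}^*_{\Ac}=\widehat{\mv{\mu}}^*_{\Ac}=0$. The displayed formula is therefore the $\pN(0,1)$ density for every $b$, whereas in fact $Y\mid U_1=b\sim\pN(b,1)$.

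So the correct general result is the one you derive. Replace $\mv{y}$ by $\mv{y}-\mv{B}^*_{,\A}\mv{b}^*$ both in $\widehat{\mv{\mu}}^*_{\Ac}$ and in the term $\mv{y}^\top\mv{\Sigma}^{-1}\mv{y}$. This reduces to the stated formula when $\mv{B}^*_{,\A}\mv{b}^*=\mv{0}$, in particular under the Kirchhoff constraint $\mv{b}=\mv{0}$. That is the only case the paper uses, and the same remark applies to $\widehat{\mv{\mu}}^*_{\Ac}$ in Proposition~\ref{Them:piXgby}.

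Present it that way. The shifted computation is not ``naive''; it is the correct one. The proposition as written needs the extra hypothesis $\mv{b}=\mv{0}$, or more generally $\mv{B}^*_{,\A}\mv{b}^*=\mv{0}$.
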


Utilizing this result allows us to write the log-likelihood, up to an additive constant, as
\begin{align}
	\label{eq:lik2}
	2l(\theta;\mv{y},\Gamma,\mv{s}) 
	&=  \log |\mv{Q}^*_{\Ac\Ac}|- \log|\widehat{\mv{Q}}^*_{\Ac\Ac} | 
	+ \widehat{\mv{\mu}}_{\Ac}^{*\top}\widehat{\mv{Q}} ^*_{\Ac\Ac}  \widehat{\mv{\mu}}^*_{\Ac}
	- \sum_{e\in\mathcal{E}_y}\log |\mv{\Sigma}_e|  
	- \sum_{e\in\mathcal{E}_y}\mv{y}_e^{\top}\mv{\Sigma}_e^{-1}\mv{y}_e,
	\end{align}
where the lower dimensional conditional expectation and precision are given by
\begin{align*}
	\widehat{\mv{Q}}^*_{\Ac\Ac} &=  \mv{T}_{\Ac,}\left( \mv{Q} + \sum_{e\in\mathcal{E}_y}\mv{D}_e^\top \mv{S}_e(\mv{t}_e)^\top \mv{\Sigma}_e^{-1}\mv{S}_e(\mv{t}_e)\mv{D}_e\right)\mv{T}_{\Ac,}^\top, \\
	\widehat{\mv{\mu}}^*_{\Ac} &=\left(\widehat{\mv{Q}}^{*}_{\Ac\Ac} \right)^{-1}  \left(
\mv{T}_{\Ac,}  \sum_{e\in\mathcal{E}_y}\mv{D}_e^\top \mv{S}_e(\mv{t}_e)^\top \mv{\Sigma}_e^{-1}\mv{y}_e \right). 
\end{align*}
These quantities can be computed solely based on sparse matrix operations and have the same computational benefits as in the $\alpha=1$ case.

The likelihood formulations for $\alpha=1$ and $\alpha>1$ also extend to
observations with covariates. For covariates $\mv{X}$ and corresponding coefficients $\mv{\beta}$, replacing $\mv{y}_e$ by
$\mv{y}_e-\mv{X}_e\mv{\beta}$ gives the corresponding likelihood. A unified
derivation for the two cases is given in Appendix~\ref{app:covariate_likelihood}.

\subsection{Likelihood evaluation with precomputed edge quantities}
\label{sec:precomputed_likelihood}

The likelihood formulas above are evaluated repeatedly during numerical
optimization. We now describe a parameter-independent precomputation that avoids
reconstructing the same edge-wise observation structure at every likelihood
evaluation.

Recall that $\theta=(\sigma_{\varepsilon},\tau,\kappa)$, and for an
edge $e\in\mathcal{E}_y$, recall that $\mv{y}_e$ denotes the observations on
the edge, and let
$\mv{t}_e=(t_{e,1},\ldots,t_{e,n_e})^\top$ denote the corresponding positions
along $e$. The parameter-independent coordinate differences needed on this edge
are
\begin{equation}
    \boldsymbol{\Delta}_{e,oo}
    =
    \left[
        t_{e,i}-t_{e,j}
    \right]_{i,j=1}^{n_e},
    \qquad
    \boldsymbol{\delta}_{e,0}
    =
    \left[
        t_{e,i}
    \right]_{i=1}^{n_e},
    \qquad
    \boldsymbol{\delta}_{e,\ell}
    =
    \left[
        \ell_e-t_{e,i}
    \right]_{i=1}^{n_e}.
    \label{eq:precomp_edge_differences}
\end{equation}
Here, $\boldsymbol{\Delta}_{e,oo}$ contains the signed differences between
interior observation locations, whereas $\boldsymbol{\delta}_{e,0}$ and
$\boldsymbol{\delta}_{e,\ell}$ contain the distances from the observations to
the left and right endpoints. These quantities are fixed during likelihood
maximization.

Using the covariance notation from Section~\ref{sec:bridge}, we write the
covariance functions with an explicit parameter argument to emphasize that the
covariance blocks are recomputed for each value of
$\theta$. Thus, the covariance matrix for
the unconditioned process at the interior observation locations is
    $\mv{C}_{e,oo}(\theta)
    =
    \varrho_M(\boldsymbol{\Delta}_{e,oo};\theta)
    :=
    \left[
        \varrho_M(t_{e,i}-t_{e,j};\theta)
    \right]_{i,j=1}^{n_e}$.
Similarly, the covariance between the interior observations and the endpoint
values and derivatives is
\begin{equation*}
    \mv{C}_{e,o\partial}(\theta)
    =
    \begin{bmatrix}
        \mv{r}_1(t_{e,1},0;\theta) & \mv{r}_1(t_{e,1},\ell_e;\theta) \\
        \vdots                     & \vdots \\
        \mv{r}_1(t_{e,n_e},0;\theta) & \mv{r}_1(t_{e,n_e},\ell_e;\theta)
    \end{bmatrix},
    \qquad
    \mv{C}_{e,\partial\partial}(\theta)
    =
    \begin{bmatrix}
        \mv{r}(0,0;\theta)        & \mv{r}(0,\ell_e;\theta) \\
        \mv{r}(\ell_e,0;\theta)   & \mv{r}(\ell_e,\ell_e;\theta)
    \end{bmatrix}.
\end{equation*}

Thus all covariance blocks needed for the bridge likelihood are obtained by
applying the covariance functions to the fixed coordinate differences in
\eqref{eq:precomp_edge_differences}, and to the fixed endpoint separation $\ell_e$. The differences are parameter
independent, whereas the numerical values of the covariance blocks change with
$\tau$ and $\kappa$.

The bridge interpolation matrix at the observation locations and the
conditional bridge covariance matrix needed for the likelihood evaluation are then
\begin{equation}
    \mv{S}_e(\mv{t}_e)
    =
    \mv{C}_{e,o\partial}(\theta)
    \mv{C}_{e,\partial\partial}^{-1}(\theta), \qquad
    \mv{\Sigma}_e
    =
    \mv{C}_{e,oo}(\theta)
    -
    \mv{C}_{e,o\partial}(\theta)
    \mv{C}_{e,\partial\partial}^{-1}(\theta)
    \mv{C}_{e,o\partial}^{\top}(\theta)
    +
    \sigma_{\varepsilon}^2\mv{I}_{n_e}.
\label{eq:precomp_bridge_matrices}
\end{equation}

The parameter-independent quantities used in the edge-wise likelihood are therefore 
\begin{equation}
    \mathcal{P}(\mv{y},\mv{s},\Gamma)
    =
    \left\{
        \left(
            \mv{y}_e,
            \boldsymbol{\Delta}_{e,oo},
            \boldsymbol{\delta}_{e,0},
            \boldsymbol{\delta}_{e,\ell},
            \ell_e,
            \mv{D}_e
        \right)
        : e\in\mathcal{E}_y
    \right\}.
    \label{eq:precomp_collection}
\end{equation}
Thus, the precomputed and non-precomputed procedures evaluate the same
likelihood. If $l$ denotes the likelihood function of either \eqref{eq:lik1} or \eqref{eq:lik2}, then 
    $l(\theta;\mv{y},\mv{s},\Gamma)
    =
    l\left(\theta;\mathcal{P}(\mv{y},\mv{s},\Gamma)\right)$.
Without precomputation, the objects in \eqref{eq:precomp_collection} are
reconstructed each time the likelihood is evaluated. With precomputation,
$\mathcal{P}$ is formed once and kept fixed during the optimization. 

\section{Prediction and sampling}\label{sec:predict}
\subsection{Prediction}
We can also utilize the bridge representation for prediction. Recall that $Y_i = u(s_i) + \varepsilon_i = u_\Gamma(s_i) +  u_{B,e_i}(t_i) + \varepsilon_i$, where $\{u_{B,e}\}_{e\in\mathcal{E}}$ are mutually independent and also independent of $u_{\Gamma}$, and $\varepsilon_i$ is independent of $u$, $i=1,\ldots,n$.
Again starting with the case $\alpha=1$, the kriging prediction is given in the following lemma:

\begin{Lemma}
\label{lem:kriging_alpha1}
Let $\Gamma$ be a compact metric graph and $u$ be a Whittle--Mat\'ern field obtained as the solution to \eqref{eq:Matern_spde}, with $\alpha=1$.
For a vector of $n$ locations $\mv{s} \in \Gamma^n,$ let $\mv{Y} = u(\mv{s}) + \mv{\varepsilon}$, where $\mv{\varepsilon} \sim \pN(\mv{0}, \sigma_{\varepsilon}^2 \mv{I}_n)$ is independent of $u$.
Then the kriging predictor of $u$  at $s^\star =(e^\star,t^\star) \in \Gamma$ is
\begin{align} \label{eq:kriging_alpha1}
\mathbb{E}\left[u(s^\star) \mid \mv{Y}= \mv{y}\right] =& \hat{u}_\Gamma(s^\star) +   r_{e^{\star}}(t^\star, \mv{t}_{e^{\star}})\mv{\Sigma}_{e^{\star}}^{-1}\left(\mv{y}_{e^{\star}} - \hat{u}_\Gamma(\mv{s}_{e^\star})\right),
\end{align} 
 where $\mv{s}_{e^\star}$ denotes the locations of the observations on the edge $e^\star$ and $\mv{t}_{e^\star}$ the corresponding positions on that edge, i.e., the $i$th element of $\mv{s}_{e^\star}$ is $(\mv{s}_{e^\star})_i = (e^\star, (\mv{t}_{e^\star})_i)$. Further, 
 $\hat{u}_\Gamma(\cdot) = \mathbb{E}[u_\Gamma(\cdot)\mid \mv{Y}=\mv{y}]$ and
$
(\mv{\Sigma}_{e^{\star}})_{ij} =  \sigma_{\varepsilon}^2 \mathbb{I}(i=j) + r_{e^{\star}}(t_i, t_j),
$
and $t_i$ is the location of the $i$th observation on edge $e^{\star}$, and $r_{e^{\star}}$ is the covariance function \eqref{eq:cov_func_whittle_matern_bridge}.
Finally for any location $s=(e,t)\in \Gamma$, we have
\begin{align}
	\label{eq:low_rank_kriging}
\hat{u}_\Gamma(s)  = \mathbb{E}[u_\Gamma(s)\mid \mv{Y} = \mv{y}]  = \mv{S}_{e}(t) \mv{D}_{e}\mv{\mu},	
\end{align}
where the functions $\mv{S}_{\cdot}(\cdot)$ and $\mv{D}_{\cdot}$ are defined in Corollary~\ref{cor:bridge_representation}, and $\mv{\mu}$ is defined in \eqref{eq:mu_def}.
\end{Lemma}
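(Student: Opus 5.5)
The plan is to use the bridge representation of Corollary~\ref{cor:bridge_representation} with $\alpha=1$, $u(s)=u_\Gamma(s)+u_{B,e}(t)$. Here $u_\Gamma(s)=\mv{S}_e(t)\mv{D}_e\mv{U}_v$ and the bridges $\{u_{B,e}\}$ are mutually independent and independent of $\mv{U}_v$. By Theorem~\ref{thm:ReprTheoremEdge_Refined}, for $\alpha=1$ the boundary data $B^1u_e$ are just vertex values, so $\mv{U}$ may be replaced by $\mv{U}_v$. The argument conditions in two layers: first on $\mv{U}_v$ together with $\mv{Y}$, then on $\mv{Y}$ alone. By the tower property,
\begin{equation*}
\mathbb{E}[u(s^\star)\mid \mv{Y}] = \mathbb{E}\bigl[\,\mathbb{E}[u(s^\star)\mid \mv{U}_v,\mv{Y}]\,\bigm|\, \mv{Y}\bigr].
\end{equation*}

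The first step computes the inner expectation. Conditional on $\mv{U}_v$, the component $u_\Gamma$ is deterministic, so $u(s^\star)-u_\Gamma(s^\star)=u_{B,e^\star}(t^\star)$. Write $\mv{R}_e=\mv{Y}_e-u_\Gamma(\mv{s}_e)=u_{B,e}(\mv{t}_e)+\mv{\varepsilon}_e$ for the residuals. The vector $(\mv{U}_v, u_{B,e^\star}(t^\star), \{\mv{R}_e\}_{e})$ is jointly Gaussian, and $\mv{U}_v$, $\{(u_{B,e},\mv{\varepsilon}_e)\}_e$ are mutually independent. Conditioning on $(\mv{U}_v,\mv{Y})$ is the same as conditioning on $(\mv{U}_v,\{\mv{R}_e\}_e)$, because these are related by an invertible affine map for fixed $\mv{U}_v$. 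Hence $u_{B,e^\star}(t^\star)$ is independent of $\mv{U}_v$ and of $\mv{R}_e$ for $e\neq e^\star$. Standard Gaussian conditioning then gives
\begin{equation*}
\mathbb{E}[u(s^\star)\mid\mv{U}_v,\mv{Y}] = u_\Gamma(s^\star) + r_{e^\star}(t^\star,\mv{t}_{e^\star})\mv{\Sigma}_{e^\star}^{-1}\bigl(\mv{Y}_{e^\star}-u_\Gamma(\mv{s}_{e^\star})\bigr).
\end{equation*}
This uses $\operatorname{Cov}(u_{B,e^\star}(t^\star),\mv{R}_{e^\star})=r_{e^\star}(t^\star,\mv{t}_{e^\star})$ and $\operatorname{Cov}(\mv{R}_{e^\star})=\mv{\Sigma}_{e^\star}$. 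If $e^\star$ has no observations, the second term is simply absent.

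The second step takes the expectation given $\mv{Y}$. The expression above is affine in $\mv{U}_v$ through $u_\Gamma(\cdot)=\mv{S}_\cdot(\cdot)\mv{D}_\cdot\mv{U}_v$, so linearity turns every $u_\Gamma$ into $\hat u_\Gamma$ and yields \eqref{eq:kriging_alpha1}. It then remains to show $\hat u_\Gamma(s)=\mv{S}_e(t)\mv{D}_e\mathbb{E}[\mv{U}_v\mid\mv{Y}]$, which is again linearity, and to identify $\mathbb{E}[\mv{U}_v\mid\mv{Y}]$ with $\mv{\mu}$ from \eqref{eq:mu_def}. For this I would use the Gaussian prior $\mv{U}_v\sim\pN(\mv{0},\mv{Q}_v^{-1})$ and the conditionally independent likelihood $\mv{Y}_e\mid\mv{U}_v\sim\pN(\mv{S}_e(\mv{t}_e)\mv{D}_e\mv{U}_v,\mv{\Sigma}_e)$ stated in Section~\ref{sec:likelihood}. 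The standard posterior formula then gives precision $\widetilde{\mv{Q}}_v$ and mean $\widetilde{\mv{Q}}_v^{-1}\sum_{e}\mv{D}_e^\top\mv{S}_e(\mv{t}_e)^\top\mv{\Sigma}_e^{-1}\mv{y}_e$, which is $\mv{\mu}$.

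The main obstacle is justifying the independence structure behind the first step. Corollary~\ref{cor:bridge_representation} asserts that the bridges are independent of one another, but it must also be confirmed that the whole family is independent of $\mv{U}_v$, not merely that each $u_{B,e}$ is independent of its own boundary data $B^\alpha u_e$ as in Theorem~\ref{thm:ReprTheoremEdge_Refined}. I would take this joint independence from the corollary's construction. Failing that, I would argue it from the Markov property: given the vertex values, the edge interiors are conditionally independent and each conditional law depends only on that edge's endpoints. A second, smaller point is that $\mv{\Sigma}_{e^\star}$ must be invertible. This holds when $\sigma_\varepsilon>0$, and otherwise follows when no observations lie at vertices, since $r_e$ is then positive definite at distinct interior points.
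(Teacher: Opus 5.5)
Your proposal is correct and follows essentially the same route as the paper: the bridge decomposition of Corollary~\ref{cor:bridge_representation}, including, as in the paper, the independence of the bridges from the vertex vector. It then uses a tower-property step whose inner conditioning includes the low-rank component, Gaussian kriging of $u_{B,e^\star}(t^\star)$ from the residuals $\mv{Y}_{e^\star}-u_\Gamma(\mv{s}_{e^\star})=u_{B,e^\star}(\mv{t}_{e^\star})+\mv{\varepsilon}_{e^\star}$, and linearity for the outer expectation. Your conditioning on $(\mv{U}_v,\mv{Y})$, which contains $\sigma(\mv{Y})$, is a slightly cleaner phrasing of the paper's tower step, and your posterior computation of $\mathbb{E}[\mv{U}_v\mid\mv{Y}]$ explicitly verifies the identification with $\mv{\mu}$ that the paper simply takes from \eqref{eq:mu_def}.
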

Note that the above expression also works for direct observations (i.e., $\sigma_{\varepsilon}^2 = 0$) by setting $\sigma_{\varepsilon}^2=0$ and letting $\mv{y}_{e^\star}$ denote the interior observations on edge $e^\star$.

Computing the kriging predictor can be split into three parts: computing $\mv{\mu}$, computing $\hat{u}_\Gamma(\mv{s}^\star)$ given $\mv{\mu}$ with computational 
complexity $\mathcal{O}\left( n_{s^\star} \right)$, where $n_{s^\star}$ is the number of kriging locations, and computing the bridge term given $\hat{u}_\Gamma(\mv{s}^\star)$ with complexity of $\mathcal{O}\left( \sum_{e\in \mathcal{E}} n_{e,s^\star}^3 \right)$, where $n_{e,s^\star}$ is the number of elements of $\mv{s}^\star$ in $e$.
Note that the final steps can be parallelized. This can be compared to the $\mathcal{O}\left(n^3 + n_{s^\star} n^2 \right)$ complexity of the regular kriging predictor. 

To derive the kriging predictor for $\alpha>1$, we need to obtain a numerically efficient representation of the conditional distribution.

\begin{proposition}\label{Them:piXgby}
	Let $\mv{U}\in\mathbb{R}^m$ and $\mv{Y}\in\mathbb{R}^n$ be Gaussian random variables such that  $\mv{U} \sim \pN\left(\mv{\mu}, \mv{Q}^{-1}\right)$ and 
	$\mv{Y} | \mv{U} \sim \pN\left( \mv{B}\mv{U} , \mv{\Sigma}\right)$ for some $n\times m$ matrix $\mv{B}$, where $\mv{Q}$  and $\mv{\Sigma}$ are strictly positive-definite.  
	Further, let $\mv{K}$ be a $k \times m$ matrix of full rank and let $\mv{T}$ be the change-of-basis matrix of $\mv{K}$.
	Then,  
	$\mv{U}| \left\{\mv{KU}=\mv{b},\mv{Y}=\mv{y}\right\} \sim  \pN\left(\widehat{\mv{\mu}}, \widehat{\mv{Q}}^{\dagger}\right)$
	for $\mv{b}\in \mathbb{R}^k$, where $\widehat{\mv{Q}}$ is the precision matrix of the (degenerate) conditional distribution, given by $\widehat{\mv{Q}} = \mv{T}^\top_{\Ac,}	\widehat{\mv{Q}}^*_{\Ac\Ac} \mv{T}_{\Ac,}$ and 
	$
	\widehat{\mv{\mu}} = \mv{T}^\top
			\scalebox{0.75}{$\begin{bmatrix}
					\mv{b}^* \\
					\widehat{\mv{\mu}}^*_{\Ac} 
				\end{bmatrix}$}$. Here 
	$\widehat{\mv{Q}}^*_{\Ac\Ac}$ and 
	$\widehat{\mv{\mu}}^*_{\Ac}$ are given in Proposition~\ref{Them:piAXsoft}.
\end{proposition}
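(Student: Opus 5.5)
Since $\mv{T}$ is orthogonal, we will work throughout in the rotated coordinates $\mv{U}^*=\mv{T}\mv{U}$ and rotate back at the end. In these coordinates $\mv{U}^*\sim\pN(\mv{\mu}^*,(\mv{Q}^*)^{-1})$ with $\mv{Q}^*=\mv{T}\mv{Q}\mv{T}^\top$. We also have $\mv{Y}\mid\mv{U}^*\sim\pN(\mv{B}^*\mv{U}^*,\mv{\Sigma})$, because $\mv{B}\mv{U}=\mv{B}\mv{T}^\top\mv{U}^*$. By construction of $\mv{T}$, the event $\{\mv{KU}=\mv{b}\}$ coincides with $\{\mv{U}^*_{\A}=\mv{b}^*\}$. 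Conditioning on it therefore fixes $\mv{U}^*_{\A}$ deterministically, and only the distribution of $\mv{U}^*_{\Ac}$ remains to be found.

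\textbf{Step 1.} We condition $\mv{U}^*$ on $\mv{U}^*_{\A}=\mv{b}^*$. The standard Gaussian conditioning formula in precision form gives
$$\mv{U}^*_{\Ac}\mid\{\mv{U}^*_{\A}=\mv{b}^*\}\sim\pN\bigl(\widetilde{\mv{\mu}}^*_{\Ac},(\mv{Q}^*_{\Ac\Ac})^{-1}\bigr),$$
where $\widetilde{\mv{\mu}}^*_{\Ac}=\mv{\mu}^*_{\Ac}-(\mv{Q}^*_{\Ac\Ac})^{-1}\mv{Q}^*_{\Ac\A}(\mv{b}^*-\mv{\mu}^*_{\A})$, exactly as in Proposition~\ref{Them:piAXsoft}.

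\textbf{Step 2.} We bring in the observations. Given $\mv{U}^*$, the law of $\mv{Y}$ depends only on $\mv{U}^*$. Under the conditioning event it is therefore $\pN(\mv{B}^*_{,\A}\mv{b}^*+\mv{B}^*_{,\Ac}\mv{U}^*_{\Ac},\mv{\Sigma})$. Bayes' rule for a Gaussian prior and a linear Gaussian likelihood then gives the posterior
$$\mv{U}^*_{\Ac}\mid\{\mv{U}^*_{\A}=\mv{b}^*,\mv{Y}=\mv{y}\}\sim\pN\bigl(\widehat{\mv{\mu}}^*_{\Ac},(\widehat{\mv{Q}}^*_{\Ac\Ac})^{-1}\bigr).$$
The posterior precision is $\widehat{\mv{Q}}^*_{\Ac\Ac}=\mv{Q}^*_{\Ac\Ac}+(\mv{B}^*_{,\Ac})^\top\mv{\Sigma}^{-1}\mv{B}^*_{,\Ac}$. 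The posterior mean is obtained by completing the square in the exponent.

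This posterior mean should match $\widehat{\mv{\mu}}^*_{\Ac}$ from Proposition~\ref{Them:piAXsoft}. That proposition writes $\mv{y}$ rather than $\mv{y}-\mv{B}^*_{,\A}\mv{b}^*$. The main point of care is therefore checking that the offset term $\mv{B}^*_{,\A}\mv{b}^*$ is either absorbed or vanishes in the paper's setting. For the Kirchhoff constraints one has $\mv{b}=\mv{0}$, so $\mv{b}^*=\mv{0}$ and the two expressions agree. Otherwise I would read $\mv{y}$ there as the offset-corrected data and check consistency with the density in Proposition~\ref{Them:piAXsoft}. Apart from this bookkeeping the step is routine.

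\textbf{Step 3.} We rotate back. Jointly, the conditional law of $\mv{U}^*$ is $[\mv{b}^{*\top},\mv{U}^{*\top}_{\Ac}]^\top$, with the first block degenerate. Its mean is $[\mv{b}^{*\top},\widehat{\mv{\mu}}^{*\top}_{\Ac}]^\top$. Its covariance is zero except for the $\Ac\Ac$ block, which equals $(\widehat{\mv{Q}}^*_{\Ac\Ac})^{-1}$. Since $\mv{U}=\mv{T}^\top\mv{U}^*$, the conditional mean is $\widehat{\mv{\mu}}=\mv{T}^\top[\mv{b}^{*\top},\widehat{\mv{\mu}}^{*\top}_{\Ac}]^\top$. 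The conditional covariance is $\mv{T}_{\Ac,}^\top(\widehat{\mv{Q}}^*_{\Ac\Ac})^{-1}\mv{T}_{\Ac,}$.

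It remains to show that this covariance is the Moore--Penrose pseudoinverse of $\widehat{\mv{Q}}=\mv{T}_{\Ac,}^\top\widehat{\mv{Q}}^*_{\Ac\Ac}\mv{T}_{\Ac,}$. Orthogonality of $\mv{T}$ gives $\mv{T}_{\Ac,}\mv{T}_{\Ac,}^\top=\mv{I}$. With this, all four Penrose conditions can be checked directly; for instance, the product of the two matrices is $\mv{T}_{\Ac,}^\top\mv{T}_{\Ac,}$, which is a symmetric projection. This concludes the argument.
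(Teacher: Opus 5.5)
Your argument is correct and follows essentially the paper's proof: Bayes' rule combining the precision-form law of $\mv{U}^*_{\Ac}$ given $\mv{U}^*_{\A}=\mv{b}^*$ with the linear Gaussian likelihood, completing the square, and mapping back through $\mv{U}=\mv{T}^\top\mv{U}^*$; your explicit Penrose-condition check is a detail the paper leaves implicit. Your worry about the offset is justified, because the paper's proportionality in \eqref{eq:proof_cond1} silently drops the cross term involving $\mv{B}^*_{,\A}\mv{b}^*$, so the stated $\widehat{\mv{\mu}}^*_{\Ac}$ is exact when $\mv{B}^*_{,\A}\mv{b}^*=\mv{0}$ (for instance in the Kirchhoff case $\mv{b}=\mv{0}$), but in general $\mv{y}$ must be replaced by $\mv{y}-\mv{B}^*_{,\A}\mv{b}^*$.
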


Based on this result, the kriging predictor is now obtained by setting $\hat{u}_\Gamma(\mv{s}^\star) = \sum_{e\in\mathcal{E}}\mv{S}_{e}(\mv{s}^\star) \mv{D}_e \mv{T}^\top[\mv{0}^\top,(\widehat{\mv{\mu}}^*_{\Ac})^\top]^\top$ in \eqref{eq:low_rank_kriging}. 
As for $\alpha=1$, the computational cost decomposes into three parts. 
The cost of evaluating $\widehat{\mv{\mu}}$ is now higher than for $\alpha=1$ because the vector contains both the process and its derivatives, whereas the remaining terms are unchanged.

\subsection{Simulation using the bridge representation}
\label{sec:unconditional_simulation}

The bridge representation also gives an exact method for simulating the field at
a finite set of locations on the graph. For fixed $\alpha$, the edge-wise part
of the algorithm has linear complexity in the number of simulation locations on
each edge.
  For each edge $e\in\mathcal{E}$,
let
$
    0<t_{e,1}^\star<\cdots<t_{e,m_e}^\star<\ell_e.
$
The simulation is done by first drawing the low-rank component
on the graph vertices. Conditional on this draw, the edge simulations are independent
and can therefore be generated in parallel.

For $\alpha=1$, the low-rank component is obtained by drawing
$
    \mv{U}^{\mathrm{sim}}_v \sim \pN(\mv{0},\mv{Q}_v^{-1}),
$
where $\mv{Q}_v$ is the vertex precision matrix used in
Section~\ref{sec:likelihood}. For $\alpha>1$, we instead draw the endpoint
values and derivatives subject to the Kirchhoff constraints. Let $\mv{Q}$ be
the precision matrix for the unconstrained endpoint vector $\mv{U}$ in
\eqref{eq:dens_u2}, then
$
    \mv{U}^{*,\mathrm{sim}}_{\Ac}
    \sim
    \pN\!\left(\mv{0},(\mv{Q}^*_{\Ac\Ac})^{-1}\right),
$ where $ \mv{U}^{\mathrm{sim}}
    =
    \mv{T}^\trsp [
        \mv{0} ,
        \mv{U}^{*,\mathrm{sim}}_{\Ac}
    ]^\trsp$,  $\mv{Q}^*=\mv{T}\mv{Q}\mv{T}^\trsp$ and $\mv{T}$ is the change-of-basis matrix from Section~\ref{sec:likelihood}.

We next define the edge-wise covariance matrices at the simulation locations.
Let $x_e$ be a centered stationary Mat\'ern process on $[0,\ell_e]$ with
covariance function $\varrho_M$ in \eqref{eq:matern_cov}, with $p=1$ and
$\nu=\alpha-\nicefrac{1}{2}$. For $\mv{t}_e^\star
=(t_{e,1}^\star,\ldots,t_{e,m_e}^\star)^\top$, let 
$
    (\mv{\Sigma}_{M,e}^\star)_{ij}
    =
    \varrho_M(t_{e,i}^\star-t_{e,j}^\star).
$
We also write
    $\mv{\Sigma}_{e,\partial\partial}
    =
    \Cov\!\left(B^\alpha x_e\right)$,
    and 
    $\mv{\Sigma}_{e,o\partial}^\star
    =
    \Cov\!\left(x_e(\mv{t}_e^\star),
    B^\alpha x_e\right)$.
Thus
$
    \mv{S}_e(\mv{t}_e^\star)
    =
    \mv{\Sigma}_{e,o\partial}^\star
    \mv{\Sigma}_{e,\partial\partial}^{-1}.
$
The corresponding noise-free bridge covariance matrix (the matrix $\mv{\Sigma}_e$ of Section~\ref{sec:likelihood} with $\sigma_{\varepsilon}=0$)
is $
    \mv{\Sigma}_e^\star
    =
    \mv{\Sigma}_{M,e}^\star
    -
    \mv{S}_e(\mv{t}_e^\star)
    \mv{\Sigma}_{e,\partial\partial}
    \mv{S}_e(\mv{t}_e^\star)^\trsp .
$

Given $\mv{U}^{\mathrm{sim}}$, the bridge components on different edges are
independent. Hence, for each edge with $m_e>0$, we draw
    $\mv{u}^{\star,\mathrm{sim}}_e
    \mid \mv{U}^{\mathrm{sim}}
    \sim
    \pN\!\left(
        \mv{S}_e(\mv{t}_e^\star)\mv{D}_e\mv{U}^{\mathrm{sim}},
        \mv{\Sigma}_e^\star
    \right)$.
Concatenating the vectors $\mv{u}^{\star,\mathrm{sim}}_e$ over
$e\in\mathcal{E}$ gives an exact finite-dimensional simulation of the field on
the whole graph.

An alternative implementation is to simulate an unconditioned stationary
Mat\'ern process on each edge and then impose the simulated endpoint data by a
kriging correction.

\begin{proposition}\label{prop:kriging_corrected_simulation}
Fix an edge $e$ and let $x_e$ be a centered stationary Mat\'ern process on
$[0,\ell_e]$ with covariance function $\varrho_M$ in \eqref{eq:matern_cov}, with
$p=1$ and $\nu=\alpha-\nicefrac{1}{2}$. Let
  $  \mv{x}_{e,o}=x_e(\mv{t}_e^\star)$ and
$
    \mv{x}_{e,\partial}=B^\alpha x_e .
$
For $\mv{b}_e\in\mathbb{R}^{2\alpha}$, define
    $\widetilde{\mv{x}}_e(\mv{b}_e)
    =
    \mv{x}_{e,o}
    +
    \mv{S}_e(\mv{t}_e^\star)
    \left(
        \mv{b}_e-\mv{x}_{e,\partial}
    \right)$.
We then have 
    $\widetilde{\mv{x}}_e(\mv{b}_e)
    \sim
    \pN\left(
        \mv{S}_e(\mv{t}_e^\star)\mv{b}_e,
        \mv{\Sigma}_e^\star
    \right)$.
\end{proposition}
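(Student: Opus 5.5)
The proof rests on the fact that $\widetilde{\mv{x}}_e(\mv{b}_e)$ is an affine function of the jointly Gaussian vector $(\mv{x}_{e,o},\mv{x}_{e,\partial})$. It is therefore Gaussian, and it suffices to compute its mean and covariance. First I would record that $(\mv{x}_{e,o},\mv{x}_{e,\partial})$ is a centered Gaussian vector. This holds because $x_e$ has Mat\'ern covariance with $\nu=\alpha-\nicefrac12$, so it is $\alpha-1$ times mean-square differentiable. Derivatives of a Gaussian process are $L_2$-limits of difference quotients and hence jointly Gaussian with the process. Its covariance blocks are $\mv{\Sigma}^\star_{M,e}$, $\mv{\Sigma}^\star_{e,o\partial}$ and $\mv{\Sigma}_{e,\partial\partial}$, with entries given by the derivatives of $\varrho_M$ as in \eqref{eq:R_matrix_edge_repr}. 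The invertibility of $\mv{\Sigma}_{e,\partial\partial}$, needed for $\mv{S}_e(\mv{t}_e^\star)$ to be defined, is part of the well-definedness asserted in Proposition~\ref{prp:Whittle_Matern_bridge_prop}, and I take it from there.

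For the mean, $\mathbb{E}[\widetilde{\mv{x}}_e(\mv{b}_e)] = \mathbb{E}[\mv{x}_{e,o}] + \mv{S}_e(\mv{t}_e^\star)(\mv{b}_e - \mathbb{E}[\mv{x}_{e,\partial}]) = \mv{S}_e(\mv{t}_e^\star)\mv{b}_e$, since $x_e$ is centered.

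For the covariance, write $\mv{S}=\mv{S}_e(\mv{t}_e^\star)=\mv{\Sigma}^\star_{e,o\partial}\mv{\Sigma}_{e,\partial\partial}^{-1}$. Expanding the covariance of $\mv{x}_{e,o}-\mv{S}\mv{x}_{e,\partial}$ gives
\[
\mv{\Sigma}^\star_{M,e} - \mv{S}(\mv{\Sigma}^\star_{e,o\partial})^\top - \mv{\Sigma}^\star_{e,o\partial}\mv{S}^\top + \mv{S}\mv{\Sigma}_{e,\partial\partial}\mv{S}^\top .
\]
Using $\mv{S}\mv{\Sigma}_{e,\partial\partial}=\mv{\Sigma}^\star_{e,o\partial}$, each cross term equals $\mv{S}\mv{\Sigma}_{e,\partial\partial}\mv{S}^\top$. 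The expression therefore collapses to $\mv{\Sigma}^\star_{M,e}-\mv{S}\mv{\Sigma}_{e,\partial\partial}\mv{S}^\top=\mv{\Sigma}_e^\star$, which is the claim.

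There is no real obstacle here: the argument is the standard Matheron conditioning identity. The only point needing care is the joint Gaussianity and the covariance identification of the derivative vector $B^\alpha x_e$ with the pointwise values. As an optional remark, the same computation shows that $\mv{x}_{e,o}-\mv{S}\mv{x}_{e,\partial}$ is uncorrelated with, hence independent of, $\mv{x}_{e,\partial}$. So $\widetilde{\mv{x}}_e(\mv{b}_e)$ has the law of $\mv{x}_{e,o}$ conditioned on $\mv{x}_{e,\partial}=\mv{b}_e$, which matches the bridge construction of Definition~\ref{def:WMB} shifted by $\mv{S}_e(\mv{t}_e^\star)\mv{b}_e$.
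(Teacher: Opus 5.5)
Your proposal is correct and follows essentially the same route as the paper's proof: Gaussianity from linearity, the mean from centering, and the covariance collapsing via $\mv{S}_e(\mv{t}_e^\star)\mv{\Sigma}_{e,\partial\partial}=\mv{\Sigma}^\star_{e,o\partial}$. You spell out two points the paper leaves implicit, namely the cross-term expansion and the joint Gaussianity of the derivative vector $B^\alpha x_e$ with the point values.
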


Taking $\mv{b}_e=\mv{D}_e\mv{U}^{\mathrm{sim}}$ in
Proposition~\ref{prop:kriging_corrected_simulation} gives the same conditional
edge distribution as the direct bridge simulation above. Thus, after the single
global draw of $\mv{U}^{\mathrm{sim}}$, the corrected Mat\'ern simulations can
be performed independently across all edges.

The unconditioned Mat\'ern process used in
Proposition~\ref{prop:kriging_corrected_simulation} can be simulated
sequentially on each edge. Put $t_{e,0}^\star=0$,
$t_{e,m_e+1}^\star=\ell_e$, and
$h_{e,i}=t_{e,i}^\star-t_{e,i-1}^\star$. Define the Markov state
$
    \mv{X}_{e,i}
    =
    \left[
        x_e(t_{e,i}^\star),
        x_e'(t_{e,i}^\star),
        \ldots,
        x_e^{(\alpha-1)}(t_{e,i}^\star)
    \right]^\trsp .
$
Stationarity implies that the transition matrices depend only on 
$h_{e,i}$. Specifically, draw
$\mv{X}_{e,0}\sim\pN(\mv{0},\mv{r}(0,0))$ and, for
$i=1,\ldots,m_e+1$, draw
    $\mv{X}_{e,i}\mid \mv{X}_{e,i-1}
    \sim
    \pN\!\left(
        \mv{A}(h_{e,i})\mv{X}_{e,i-1},
        \mv{\Omega}(h_{e,i})
    \right)$,
where
    $\mv{A}(h)
    =
    \mv{r}(h,0)\mv{r}(0,0)^{-1}$ and 
    $\mv{\Omega}(h)
    =
    \mv{r}(0,0)
    -
    \mv{A}(h)\mv{r}(0,h)$.
\section{Simulation study}\label{sec:simulation}

We use the street network of the University of Chicago neighborhood, considered by
\citet{baddeley2021analysing}, as a fixed
metric graph and simulate synthetic Whittle--Mat\'ern fields on it to compare
the finite-dimensional samplers. The design is in the spirit of the simulation
study of \citet{alegria2026computationally}. After conversion to a metric graph $\Gamma$, it
has $338$ vertices and $503$ edges. We write
$|\Gamma|=\sum_{e\in\mathcal{E}}\ell_e$ for the total network length.

We simulate centered Whittle--Mat\'ern fields with $\alpha\in\{1,2\}$. In both
cases, the marginal standard deviation $\sigma$, defined in \eqref{eq:tau_microergodic}, is set to $1$ and the practical
range is set to 
$0.3\times(\text{diameter of }\Gamma)=609$.
For each $\alpha$, we put
$\nu=\alpha-\nicefrac{1}{2}$,
$
    \kappa=\frac{\sqrt{8\nu}}{609}.
$
The realization
plots use $200$ equally spaced interior locations on each edge.

We compare three samplers. The \emph{Direct sampler} implements the
conditional edge construction literally: after drawing the
low-rank process, consisting of vertex values for
$\alpha=1$ and vertex values together with endpoint derivatives for
$\alpha=2$, it loops over the edges and samples the conditional Gaussian
vector on each edge directly. The \emph{Kriging-corrected edge sampler}
instead simulates an unconditioned stationary Mat\'ern process on each edge and
then imposes the simulated endpoint data using the correction in
Proposition~\ref{prop:kriging_corrected_simulation}. The
\emph{Expanded-graph sampler} inserts all requested simulation locations as
vertices and then samples from the finite-dimensional distribution on the
expanded graph, following the construction used by \citet{split1}. It is important to note that all three samplers are exact, so their only difference is computational. 

\begin{figure}[t]
    \centering
    \begin{subfigure}[t]{0.48\linewidth}
        \centering
        \includegraphics[width=0.7\linewidth]{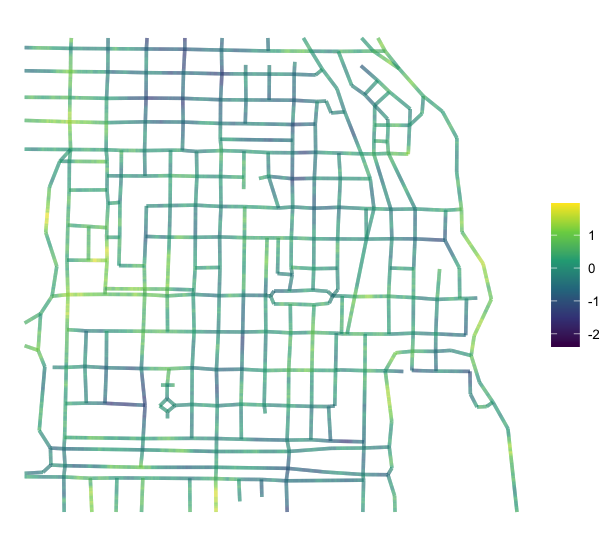}
        \caption{$\alpha=1$}
    \end{subfigure}
    \hfill
    \begin{subfigure}[t]{0.48\linewidth}
        \centering
        \includegraphics[width=0.7\linewidth]{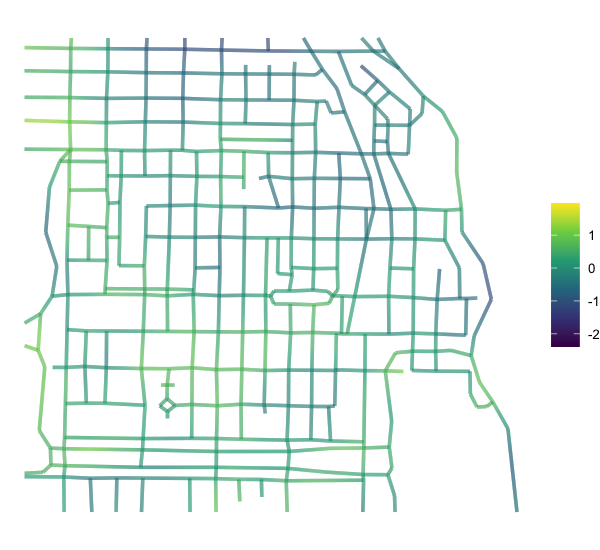}
        \caption{$\alpha=2$}
    \end{subfigure}
    \caption{Simulated Whittle--Mat\'ern fields on the Chicago street network.}
    \label{fig:chicago_sim_fields}
\end{figure}

Figure~\ref{fig:chicago_sim_fields} shows one realization for each value of
$\alpha$, generated using the kriging-corrected edge sampler. The case
$\alpha=2$ produces smoother variation along the network than the case
$\alpha=1$, as expected from the larger regularity parameter.
The timing data are reported in Table~\ref{tab:chicago_sim_data}.
The timings were obtained on a MacBook Pro with an Apple M5 Pro processor and 48 GB of RAM, without explicit parallelization. 
For the direct
edge-loop and kriging-corrected edge samplers, each entry is the median
wall-clock time over five repetitions. The kriging-corrected edge sampler is the fastest
sampler in every reported setting. The direct edge-loop sampler is competitive
on coarse grids, but its cost increases rapidly as the number of locations per
edge grows.
For $\alpha=2$, the expanded-graph sampler is substantially slower and becomes
numerically unstable on the larger grids.

For comparison, we include the spectral method of
\citet{alegria2026computationally}, which was the fastest method reported there
for simulating isotropic exponential random fields on this graph. 
Although the random fields of \citet{alegria2026computationally} are not directly
comparable to those considered here, the computational contrast is striking: by
exploiting the Markov property together with the bridge sampler, even our largest
kriging-corrected simulations are more than four times faster than the smallest approximate simulation of the isotropic model, and about two orders of magnitude faster than its smallest exact simulation.
Also note that our methods are exact, whereas the spectral method uses an approximate Gaussian density.

\begin{table}[t]
    \centering
    \scriptsize
    \setlength{\tabcolsep}{3pt}
    \begin{tabular}{llrrrrrrrr}
        \toprule
        & & \multicolumn{8}{c}{Number of simulation locations} \\
        \cmidrule(lr){3-10}
        $\alpha$ & Sampler
        & $4{,}024$ & $8{,}048$ & $16{,}096$ & $32{,}192$ & $64{,}384$
        & $128{,}768$ & $257{,}536$ & $515{,}072$ \\
        \cmidrule(lr){1-10}
        $\alpha=1$ & Direct 
        & 0.01 & 0.01 & 0.01 & 0.02 & 0.05 & 0.21 & 1.30 & 7.76 \\
        & Kriging-corrected edge
        & \textbf{0.01} & \textbf{0.01}
        & \textbf{0.01} & \textbf{0.01} & \textbf{0.02}
        & \textbf{0.03} & \textbf{0.05} & \textbf{0.10} \\
        & Expanded graph
        & 0.05 & 0.07 & 0.10 & 0.20 & 0.38 & 0.84 & 2.14 & 4.15 \\
        \cmidrule(lr){1-10}
        $\alpha=2$ & Direct 
        & 0.01 & 0.01 & 0.01 & 0.02 & 0.06 & 0.26 & 2.47 & 22.01\\
        & Kriging-corrected edge
        & \textbf{0.01} & \textbf{0.01}
        & \textbf{0.01} & \textbf{0.01} & \textbf{0.02}
        & \textbf{0.03} & \textbf{0.06} & \textbf{0.11} \\
        & Expanded graph
        & 0.05 & 0.07 & 0.11 & 0.24 & 0.56 & 1.26
        & \textemdash & \textemdash \\
        \cmidrule(lr){1-10}
        & Isotropic exp
        & 12.20 & 77.39 & 602.60 & \textemdash
        & \textemdash & \textemdash & \textemdash & \textemdash \\
        & Isotropic exp (approximate)
        & 0.52 & 0.50 & 0.86 & 1.23 & 2.11 & 4.07 & 8.39 & 17.68 \\
        \bottomrule
    \end{tabular}
    \caption{Timings in seconds for simulating one finite-dimensional
    realization on the Chicago street network.
    The number of simulation locations is the total number of interior edge
    locations, corresponding to $8$, $16$, $32$, $64$, $128$, $256$, $512$, and $1024$
    locations per edge, excluding graph vertices.
    All times are the median over five
    repetitions.
    The fastest entry within each $\alpha$ block and column is shown in bold. Dashes indicate cases where the
    implementation returned non-positive-definite matrices or was too slow to run (for Isotropic exp).}
    \label{tab:chicago_sim_data}
\end{table}

\section{Application}\label{sec:application}

To demonstrate the computational advantages of the bridge methods further, we analyze traffic intensity data sourced from the Open data portal of the Madrid City Council  (\url{https://datos.madrid.es/}), which provides real-time traffic data for the city of Madrid since 2013.  We explore the intensity of traffic averaged over the time period 6pm to 7pm from March 3 to March 7, 2025.
 The data are collected from $4{,}139$ sensors located on the streets of the city.
 The street graph is taken from OpenStreetMap (OSM) and consists of $30{,}111$ edges and $20{,}713$ vertices after pruning vertices of degree two.
The graph and observation locations are shown in Figure~\ref{figs:realdata}.

\begin{figure}[t]
	\centering
	\includegraphics[width=0.7\linewidth]{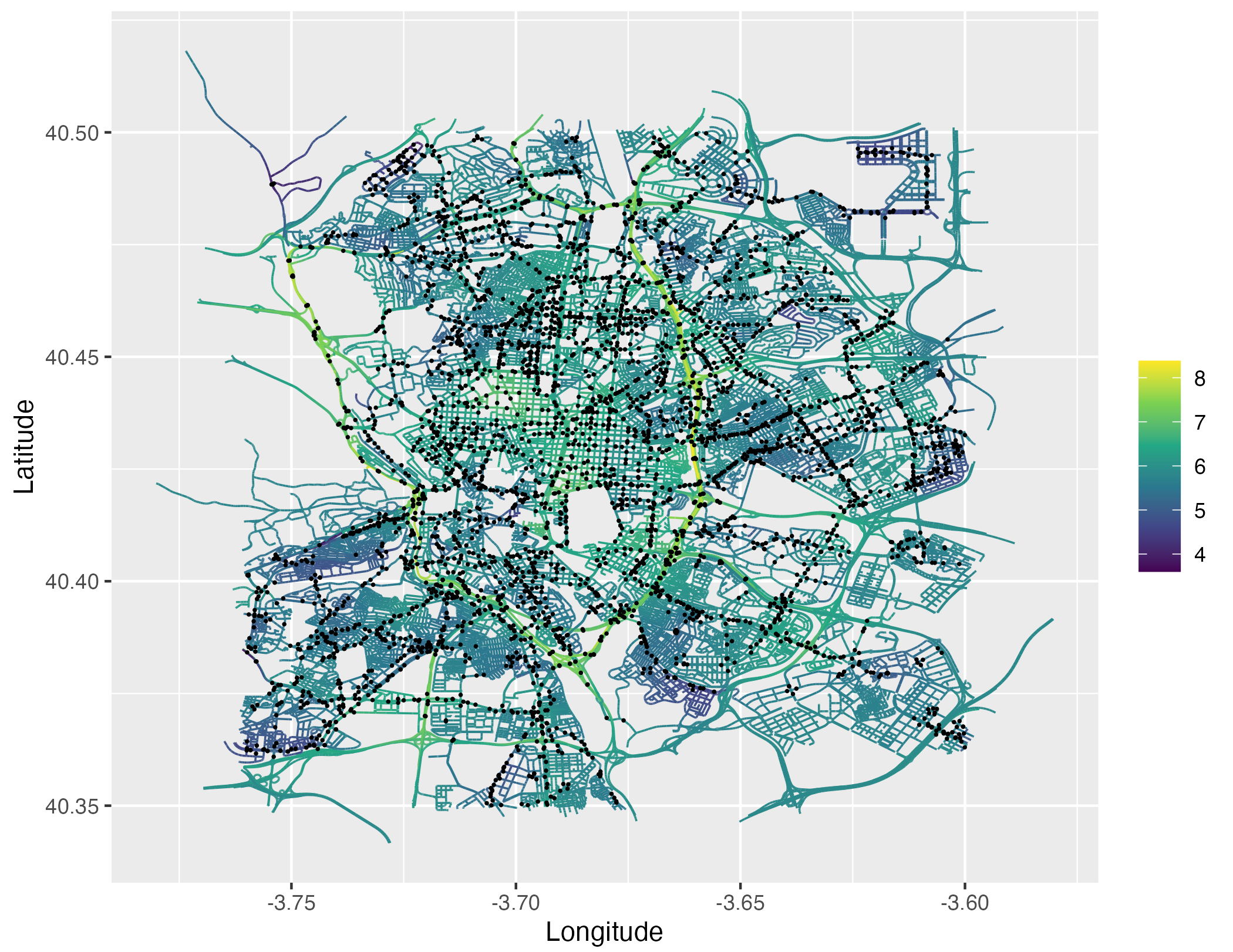}
	\vspace{-0.5cm}
	\caption{The street graph of Madrid, with the observation locations shown as black dots, and  interpolation of the log-intensity using the $\alpha=2$ Whittle--Mat\'ern model.}
	\label{figs:realdata}
\end{figure}

We first compare the computational time required to evaluate the likelihood functions for the Whittle--Mat\'ern models with $\alpha = 1, 2$, with and without the bridge method, and the isotropic exponential model of \citet{anderes2020isotropic} as a benchmark. However, the isotropic model is not guaranteed to be valid as the graph does not have Euclidean edges.

The computation times are divided into two steps: a precomputation step, which is performed only once, and the likelihood evaluation step, which is repeated during the numerical optimization of the likelihood. 
For the Whittle--Mat\'ern models, the precomputation step involves storing the data for each edge, while for the bridge method, we compute the distances between observations on the edges. For the isotropic model, the precomputation step consists of computing the resistance metric between the observation locations. This one-time cost dominates the reported precomputation time for the isotropic model. Given these distances, the remaining precomputation takes only about $1.2\times 10^{-4}$ s on average, so the direct and precomputed likelihood evaluations have essentially the same cost.
\begin{table}[t]
\centering
\small
\begin{tabular}{@{}lccc@{}}
\toprule
\textbf{Model}
& \textbf{Precompute}
& \multicolumn{2}{c}{\textbf{Likelihood}} \\
\cmidrule(lr){3-4}
& \textbf{time}
& \textbf{Precomputed}
& \textbf{Direct} \\
\midrule
WM $\alpha=1$
& 0.05 & 0.38 & 0.43 \\

WM $\alpha=1$ bridge
& 0.04 & 0.30 & 0.36 \\

WM $\alpha=2$
& 0.05 & \textemdash & \textemdash \\

WM $\alpha=2$ bridge
& 0.04 & 0.54 & 0.63 \\

Isotropic exponential
& 283.19 & 8.71 & \textemdash \\
\bottomrule
\end{tabular}
\caption{Average computation times, in seconds, for precomputation and
likelihood evaluation. WM denotes the Whittle--Matérn model.
Times are averaged over 100 repetitions, except for the isotropic
resistance-distance precomputation, which is averaged over five repetitions.
Dashes indicate unavailable timings or numerically unstable cases.}
\label{tab:timings_realdata}
\end{table}

 The resulting timings are shown in Table~\ref{tab:timings_realdata}. Due to numerical instability, the optimization for $\alpha=2$ fails unless the bridge method is used. 
 The instability is due to very short edges in the graph obtained by adding measurement locations as vertices.  For example, the locations of traffic cameras at intersections are very close to the intersection vertices (often less than 2\,m apart). In any case,
 the increased numerical stability is another advantage of the bridge method. 
It should be noted that the precomputation time for the isotropic exponential model is very large due 
to the step where the resistance metric is computed. 
The computation time after precomputation is also high due to the need to compute Cholesky factors of high-dimensional dense matrices. 
The Whittle--Mat\'ern models are substantially faster, and the bridge method reduces the precomputed likelihood-evaluation time by approximately $20\%$ for $\alpha=1$.
Finally, using precomputed quantities,
maximum-likelihood fitting of the bridge Whittle--Mat\'ern model 
took $86.4$ and $127.7$ seconds for $\alpha=1$ and $\alpha=2$,
 respectively, whereas fitting the isotropic exponential model required $68$ minutes.

The log-likelihood values for the fitted models are $-5646.46$ and $-5626.62$ for the Whittle--Mat\'ern models with $\alpha=1$ and $\alpha=2$, respectively, and $-5654.85$ for the isotropic exponential model. Thus, the model with $\alpha=2$ fits best according to the likelihood value, and since all models have the same number of parameters, also fits best according to metrics such as AIC and BIC.

Finally, to better understand the scaling of the different methods in terms of likelihood evaluation, 
we performed a simulation study where we randomly placed $n$ observation locations on the network and measured the computational time required to evaluate the log-likelihood.
 Each measurement was repeated five times to obtain average timings. 
 The results, summarized in Figure~\ref{fig:timings}, demonstrate the efficiency gains achieved by leveraging the sparsity and the bridge representation.
All experiments were run on a MacBook Pro with an Apple M5 Pro processor and 48 GB of RAM.  As shown in Figure~\ref{fig:timings}, it was not possible to run the isotropic exponential model for all values of $n$ due to insufficient RAM.

\begin{figure}[t]
	\centering
	\includegraphics[width=\linewidth]{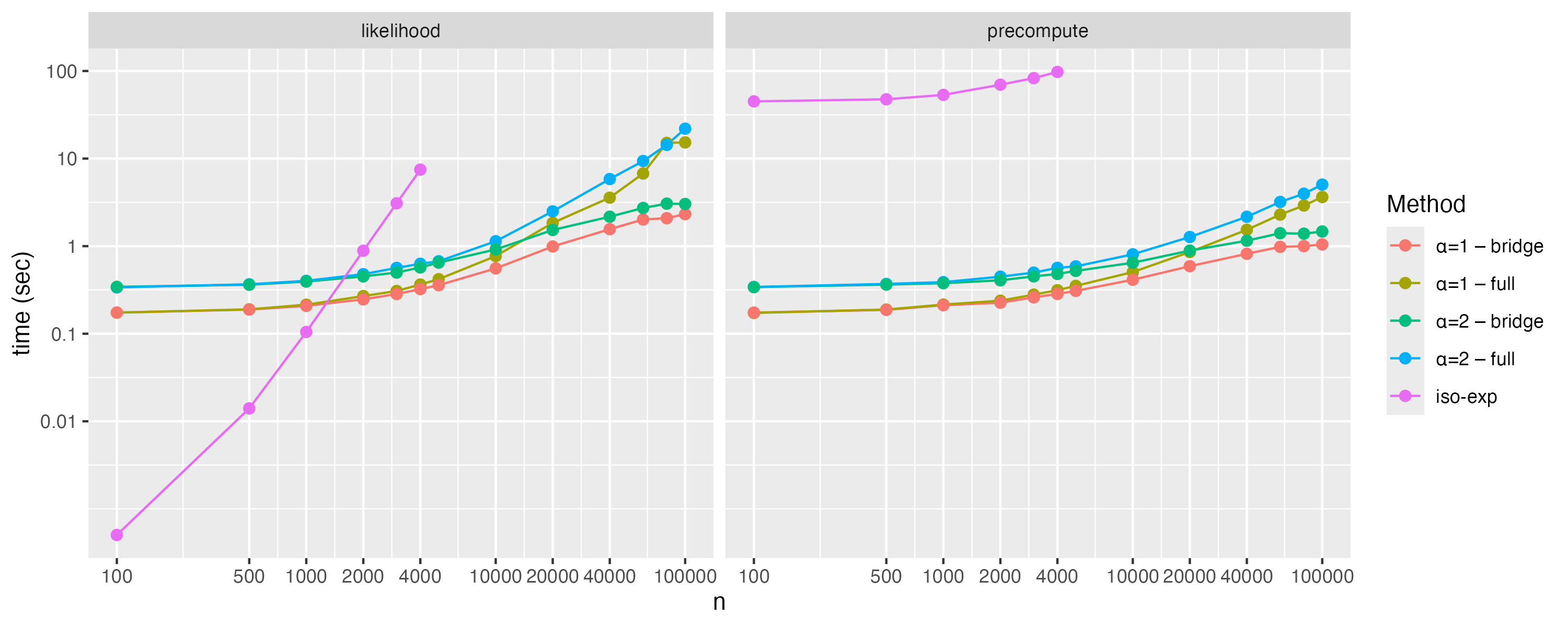}
	\vspace{-0.5cm}
	\caption{Average computation time for log-likelihood evaluation split into likelihood evaluation cost (left) and precompute cost (right), using five methods, as functions of the number of observations ($n$), based on the street network shown in Figure~\ref{figs:realdata}.}
	\label{fig:timings}
\end{figure}

\section{Discussion}\label{sec:discussion}
We have derived a new bridge representation of the Whittle--Mat\'ern fields, which allows for more computationally efficient and exact likelihood evaluation and spatial prediction for the Markov cases $\alpha\in\mathbb{N}$. Importantly, these methods are also more numerically stable than the original methods of \citet{split1}, in particular for $\alpha>1$. A limitation of the bridge approach is that the cost of the edge-wise computations is cubic in the number of observations on each individual edge, so the expanded-graph approach may remain preferable for graphs where a large number of observations concentrate on a few edges.

There are several directions in which this work can be extended. 
In particular, there are many extensions of the Whittle--Mat\'ern fields that could be considered, of which spatio-temporal models are an important future direction.  
Another interesting extension is to consider generalized Whittle--Mat\'ern fields \citep{bolinetal_fem_graph}, 
where we can allow the parameter $\kappa$ to be non-stationary over the graph. 
Another important direction is the extension to fractional orders $\alpha\notin\mathbb{N}$, for which the fields are no longer Markov and the bridge representation does not apply directly; combining the ideas presented here with the finite element and rational approximations of \citet{bolinetal_fem_graph} is an interesting topic for future research.

The code replicating all results is available at \url{https://github.com/davidbolin/Bridge}.

\appendix

\section{A unified likelihood formulation with covariates}
\label{app:covariate_likelihood}

This appendix presents a common formulation of the fixed-effect extension for
the two bridge likelihoods in Section~\ref{sec:likelihood}. The formulation below
recovers the $\alpha=1$ likelihood in \eqref{eq:lik1} and the $\alpha>1$
likelihood in \eqref{eq:lik2} through different choices of the reduced latent
vector and its precision matrix.

Consider the observation model
    $\mv{Y}=\mv{X}\mv{\beta}+u(\mv{s})+\mv{\varepsilon}$,
where $\mv{X}$ is the $n\times q$ design matrix. For each
$e\in\mathcal{E}_y$, let $\mv{X}_e$ contain the rows of $\mv{X}$
corresponding to the observations on edge $e$. Both bridge likelihoods can be
expressed in terms of a reduced latent vector $\mv{Z}$ with precision
$\mv{Q}_{0,\theta}$ and edge-wise observation matrices
$\mv{L}_{e,\theta}$ as
$\mv{Y}_e\mid\mv{Z},\mv{\beta} \sim
    \pN\left(
        \mv{X}_e\mv{\beta}+\mv{L}_{e,\theta}\mv{Z},
        \mv{\Sigma}_e \right)$,
    with 
    $\mv{Z}\sim\pN(\mv{0},\mv{Q}_{0,\theta}^{-1})$.
The two cases are obtained from
$\mv{Z} = \mv{U}_v$, $\mv{Q}_{0,\theta} = \mv{Q}_v$ and $\mv{L}_{e,\theta} = \mv{S}_e(\mv{t}_e)\mv{D}_e$ for $\alpha=1$ and 
$\mv{Z} = \mv{U}^*_{\Ac}$,
$\mv{Q}_{0,\theta} = \mv{Q}^*_{\Ac\Ac}$,
$\mv{L}_{e,\theta} = \mv{S}_e(\mv{t}_e)\mv{D}_e\mv{T}_{\Ac,}^{\top}$ for $\alpha > 1$. 

For fixed covariance parameters
$\theta=(\sigma_\varepsilon,\tau,\kappa)$, define
    $\widetilde{\mv{Q}}_\theta
    =
    \mv{Q}_{0,\theta}
    +
    \sum_{e\in\mathcal{E}_y}
    \mv{L}_{e,\theta}^{\top}\mv{\Sigma}_e^{-1}
    \mv{L}_{e,\theta}$,
    $\mv{g}_\theta
    =
    \sum_{e\in\mathcal{E}_y}
    \mv{L}_{e,\theta}^{\top}\mv{\Sigma}_e^{-1}\mv{y}_e$,
    and 
    $\mv{G}_\theta
    =
    \sum_{e\in\mathcal{E}_y}
    \mv{L}_{e,\theta}^{\top}\mv{\Sigma}_e^{-1}\mv{X}_e$.
Writing
$\mv{r}_e(\mv{\beta})=\mv{y}_e-\mv{X}_e\mv{\beta}$ and
$\mv{g}_\theta(\mv{\beta})=\mv{g}_\theta-\mv{G}_\theta\mv{\beta}$, the common
log-likelihood expression is, up to an additive constant,
\begin{align}
    2l(\theta,\mv{\beta};\mv{y},\Gamma,\mv{s})
    &=
    \log|\mv{Q}_{0,\theta}|
    -\log|\widetilde{\mv{Q}}_\theta|
    -\sum_{e\in\mathcal{E}_y}\log|\mv{\Sigma}_e|
    \nonumber\\
    &\quad
    -\sum_{e\in\mathcal{E}_y}
    \mv{r}_e(\mv{\beta})^\top\mv{\Sigma}_e^{-1}\mv{r}_e(\mv{\beta})
    +\mv{g}_\theta(\mv{\beta})^\top
    \widetilde{\mv{Q}}_\theta^{-1}\mv{g}_\theta(\mv{\beta}).
    \label{eq:unified_covariate_likelihood}
\end{align}

\section{Likelihood evaluation with direct observations}\label{app:direct}
This appendix gives the bridge likelihood for the noise-free case $\sigma_{\varepsilon}^2=0$ when exact observations may occur at graph vertices. The main likelihood formula \eqref{eq:lik1} applies directly when all exact observations are in edge interiors, so the extra bookkeeping below is only needed for observed vertices.

Let $\alpha=1$ and let $\mv{v}_{\text{obs}}$ be the set of vertices with exact observations and $\mv{v}_{\text{nobs}}$ the remaining vertices. 
Write the stacked vertex field as
\[
\mv{U}_v=\begin{bmatrix}\mv{U}_{\mv{v}_{\text{obs}}}\\ \mv{U}_{\mv{v}_{\text{nobs}}}\end{bmatrix},
\qquad
\mv{Q}_v =
\begin{bmatrix}
\mv{Q}_{oo} & \mv{Q}_{on}\\[2pt]
\mv{Q}_{no} & \mv{Q}_{nn}
\end{bmatrix},
\]
where the precision $\mv{Q}_v$ is partitioned into the observed (o) and unobserved (n) vertex indices.
By Gaussian conditioning,
$
\mv{U}_{\mv{v}_{\text{obs}}}\sim\pN\!\left(\mv{0},\,\mv{Q}_{\text{obs}}^{-1}\right),
$
with 
$\mv{Q}_{\text{obs}}:=\mv{Q}_{oo}-\mv{Q}_{on}\mv{Q}_{nn}^{-1}\mv{Q}_{no}$,
and
\[
\mv{U}_{\mv{v}_{\text{nobs}}}\mid \mv{U}_{\mv{v}_{\text{obs}}}=\mv{y}_{\text{obs}}
\sim
\pN\!\left(-\mv{Q}_{nn}^{-1}\mv{Q}_{no}\,\mv{y}_{\text{obs}},\ \mv{Q}_{nn}^{-1}\right).
\]

Let $\mv{C}_{\text{obs}}$ and $\mv{C}_{\text{nobs}}$ be the selection matrices such that
$\mv{U}_{\mv{v}_{\text{obs}}}=\mv{C}_{\text{obs}}\mv{U}_v$ and
$\mv{U}_{\mv{v}_{\text{nobs}}}=\mv{C}_{\text{nobs}}\mv{U}_v$.
Let $\mv{C}^{\text{obs}}$ and $\mv{C}^{\text{nobs}}$ be the corresponding embedding matrices that map vectors on the observed/unobserved subspaces back to the full vertex space (filling non-selected entries with zeros), so that
$\mv{C}_{\text{obs}}\mv{C}^{\text{obs}}=\mv{I}$,
$\mv{C}_{\text{nobs}}\mv{C}^{\text{nobs}}=\mv{I}$,
and $\mv{C}^{\text{obs}}\mv{C}_{\text{obs}}+\mv{C}^{\text{nobs}}\mv{C}_{\text{nobs}}=\mv{I}$.

For each edge $e$, define the $2\times 2$ diagonal matrix 
\[
\mv{A}_e := \mathrm{diag}\!\Big(\mathbb{I}\!\big((e,0)\in\mv{v}_{\text{obs}}\big),\ \mathbb{I}\!\big((e,\ell_e)\in\mv{v}_{\text{obs}}\big)\Big),
\]
that has a $1$ at the first diagonal entry if the vertex at the start of the edge is observed, and similarly a $1$ on the second diagonal entry if the vertex at the end of the edge is observed, 
and set $\mv{A}_e^{C}:=\mv{I}-\mv{A}_e$.
Let $\mathcal{E}_y$ be the set of edges with interior observations and let $\mv{y}_e$ be the vector of interior observations on edge $e\in\mathcal{E}_y$. 
Removing the contribution of \emph{observed} endpoints gives the residual
$
\tilde{\mv{y}}_e
:=
\mv{y}_e - \mv{S}_e(\mv{t}_e)\mv{A}_e\,\mv{D}_e\mv{C}^{\text{obs}}\mv{y}_{\text{obs}}.
$

Up to an additive constant, the log-likelihood can be written as
\begin{align*}
2l(\tau,\kappa;\mv{y})
=&
\log\lvert \mv{Q}_{\text{obs}}\rvert
+ \log\lvert \mv{Q}_{nn}\rvert
- \log\lvert \widetilde{\mv{Q}}_{\text{nobs}}\rvert
- \sum_{e\in\mathcal{E}_y}\log\lvert \mv{\Sigma}_e\rvert
\\
& - \mv{y}_{\text{obs}}^{\!\top}\mv{Q}_{\text{obs}}\mv{y}_{\text{obs}}
+ \mv{\mu}_{\text{nobs}}^{\!\top}\mv{Q}_{nn}\mv{\mu}_{\text{nobs}}
- \sum_{e\in\mathcal{E}_y}\tilde{\mv{y}}_e^{\!\top}\mv{\Sigma}_e^{-1}\tilde{\mv{y}}_e,
\end{align*}
where 
\begin{align*}
\mv{\mu}_{\text{nobs}}
&=
\widetilde{\mv{Q}}_{\text{nobs}}^{-1}\!
\Bigl(
-\mv{Q}_{no}\mv{y}_{\text{obs}}
+
\mv{C}_{\text{nobs}}
\sum_{e\in\mathcal{E}_y}
\mv{D}_e^{\!\top}\mv{A}_e^{C}\mv{S}_e(\mv{t}_e)^{\!\top}\mv{\Sigma}_e^{-1}\tilde{\mv{y}}_e
\Bigr),
\\
\widetilde{\mv{Q}}_{\text{nobs}}
&=
\mv{Q}_{nn}
+
\mv{C}_{\text{nobs}}
\Bigl(
\sum_{e\in\mathcal{E}_y}
\mv{D}_e^{\!\top}\mv{A}_e^{C}\mv{S}_e(\mv{t}_e)^{\!\top}
\mv{\Sigma}_e^{-1}
\mv{S}_e(\mv{t}_e)\mv{A}_e^{C}\mv{D}_e
\Bigr)\mv{C}_{\text{nobs}}^{\!\top}.
\end{align*}

Finally, the conditional mean of the full vertex vector is 
$
\mv{\mu}_v
=
\mv{C}^{\text{obs}}\mv{y}_{\text{obs}}
+
\mv{C}^{\text{nobs}}\mv{\mu}_{\text{nobs}}.
$

\section{Proofs}\label{sec:AppendixProof}

\subsection{Proofs for Section~\ref{sec:bridge}}\label{app:proofs:bridge}
We first collect the proofs supporting the bridge representation. The proof of Theorem~\ref{thm:ReprTheoremEdge_Refined} uses the edge-local Sobolev and Cameron--Martin space identifications recalled below.

\begin{proof}[Proof of Proposition~\ref{prp:Whittle_Matern_bridge_prop}]
	The expression for the covariance function \eqref{eq:cov_func_whittle_matern_bridge} follows directly from Definition~\ref{def:WMB}
	 and conditioning of normal random vectors.
	From \citet[][Theorem 2]{split1}, 
	we have that a Gaussian process with a Mat\'ern covariance function on $e$ can be written as a Whittle--Mat\'ern bridge process plus a correction term which is infinitely differentiable. The remaining properties of the Whittle--Mat\'ern bridge process are, thus, direct consequences of this fact and that a Gaussian process with a Mat\'ern covariance has these properties.
\end{proof}

Let $L_2(\Gamma)=\bigoplus_{e\in\mathcal{E}}L_2(e)$, with norm
$\|f\|_{L_2(\Gamma)}^2=\sum_{e\in\mathcal{E}}\|f_e\|_{L_2(e)}^2$. 
For $\kappa>0$, let $L=\kappa^2-\Delta_\Gamma$.
Since $\Gamma$ is compact, $L$ is self-adjoint, strictly positive-definite, and has an orthonormal basis of eigenfunctions $(\varphi_j)_{j\in\mathbb{N}}$ with eigenvalues $(\lambda_j)_{j\in\mathbb{N}}$. 
For $r>0$, the fractional power of $L$ is defined spectrally by
    $L^r \phi
    =
    \sum_{j\in\mathbb{N}}
    \lambda_j^r(\phi,\varphi_j)_{L_2(\Gamma)}\varphi_j$,
on its natural domain. 
We write
\[
    \dot{H}^{r}(\Gamma)
    =
    \left\{
        \phi\in L_2(\Gamma):
        \sum_{j\in\mathbb{N}}\lambda_j^{r}
        (\phi,\varphi_j)_{L_2(\Gamma)}^2<\infty
    \right\},
    \qquad r>0,
\]
so that $\dot{H}^{r}(\Gamma)=\mathcal{D}(L^{r/2})$ with the corresponding graph norm.

A Gaussian white noise on $L_2(\Gamma)$ is an isonormal Gaussian process
$\{\mathcal{W}(h):h\in L_2(\Gamma)\}$ satisfying
    $\pE[\mathcal{W}(h)\mathcal{W}(g)]
    =
    (h,g)_{L_2(\Gamma)}$ for all  $h,g\in L_2(\Gamma)$.
The weak solution of \eqref{eq:Matern_spde} is the centered Gaussian field satisfying
\[
    (u,\psi)_{L_2(\Gamma)}
    =
    \mathcal{W}\bigl(\tau^{-1}L^{-\alpha/2}\psi\bigr),
    \qquad \psi\in L_2(\Gamma),
\]
which for $\alpha>1/2$ has a continuous version \citep[][Proposition~1]{BSW2022}.

It remains to specify the edge-local spaces used to identify the bridge component. 
For an edge $e$ identified with $[0,\ell_e]$, let $C_c^\infty(e)$ be the set of infinitely differentiable functions with support compactly contained in the interior of $e$, and let $H_0^\alpha(e)$ be its completion in the classical Sobolev norm $\|\cdot\|_{H^\alpha(e)}$. 
On test functions, define
\[
    (u,v)_{\alpha,e}
    =
    (u,L_e^\alpha v)_{L_2(e)},
    \qquad
    L_e=\kappa^2-\frac{d^2}{dt^2},
    \qquad
    u,v\in C_c^\infty(e),
\]
and use the same notation for the continuous extension of this bilinear form to $H_0^\alpha(e)\times H_0^\alpha(e)$.

We also use the spectral edge space inherited from the graph. 
Let
    $\dot{H}^{\alpha}(e)
    =
    \{h|_e:h\in\dot{H}^{\alpha}(\Gamma)\}$,
    and
    $\dot{H}_0^\alpha(e)
    =
    C_c(e)\cap \dot{H}^{\alpha}(e)$,
where $C_c(e)$ denotes continuous functions on $e$, identified with their zero extension to $\Gamma$, whose support is compactly contained in the interior of $e$. 
For Hilbert spaces $E$ and $F$, we write $E\hookrightarrow F$ if $E\subset F$ and the inclusion map is continuous, and write $E\cong F$ if both $E\hookrightarrow F$ and $F\hookrightarrow E$ hold.

\begin{proof}[Proof of Theorem~\ref{thm:ReprTheoremEdge_Refined}]
Fix an edge $e\in\mathcal{E}$. 
By \citet[Theorem~6]{BSW_Markov}, the restriction of the Whittle--Mat\'ern field to $e$ admits an edge representation of the form
\[
    u_e(t)=\widetilde{u}_{B,e}(t)+\widetilde{\mv{S}}_e(t)B^\alpha u_e,
    \qquad t\in[0,\ell_e],
\]
where $\widetilde{u}_{B,e}$ is independent of the endpoint data $B^\alpha u_e$. 
Moreover, \citet[Lemma~1]{split1} gives the explicit interpolation matrix, which is precisely $\widetilde{\mv{S}}_e(t)=\mv{S}_e(t)$ with $\mv{S}_e(t)$ as in \eqref{eq:edge_repr_Solution}.
It remains to identify the law of the zero-boundary component $\widetilde{u}_{B,e}$ with the Whittle--Mat\'ern bridge process in Definition~\ref{def:WMB}. 
By \citet[Lemma~2]{split1}, the Cameron--Martin space of the Whittle--Mat\'ern bridge process is
$(H_0^\alpha(e),(\cdot,\cdot)_{\alpha,e})$. 
The zero-boundary component in the edge representation has Cameron--Martin space
$\mathcal{H}_{0,I}(e)$ in the notation of \citet{BSW_Markov}. 
By \citet[Theorem~8]{BSW_Markov},
    $\mathcal{H}_{0,I}(e)\cong \dot{H}_0^\alpha(e)$,
and, by \citet[Lemma~3]{split1},
    $(\dot{H}_0^\alpha(e),(\cdot,\cdot)_{\alpha,e})
    \cong
    (H_0^\alpha(e),(\cdot,\cdot)_{\alpha,e})$.
These two identifications are the Hilbert-space equivalences needed to identify the zero-boundary term in the edge representation with the Whittle--Mat\'ern bridge process of Definition~\ref{def:WMB}. 
Consequently we may write $\widetilde{u}_{B,e}=u_{B,e}$, where $u_{B,e}$ is the Whittle--Mat\'ern bridge on $[0,\ell_e]$. 
Substituting this identification and the expression for $\mv{S}_e(t)$ into the edge representation gives
    $u_e(t)=u_{B,e}(t)+\mv{S}_e(t)B^\alpha u_e$ for 
    $t\in[0,\ell_e]$,
with $u_{B,e}$ independent of $B^\alpha u_e$, as claimed.
\end{proof}

\begin{proof}[Proof of Corollary~\ref{cor:bridge_representation}]
	Consider the Whittle--Mat\'ern field $u(s)$ on  $\Gamma$. 
	By \citet[Theorem~5 and Remark 3]{BSW_Markov}, $u(s_1)$ and $u(s_2)$ are conditionally independent given $\mv{U}$ 
	if $s_1$ and $s_2$ are locations on different edges. 
	By Theorem~\ref{thm:ReprTheoremEdge_Refined}, 
	for any $s=(e_s,t_s) \in \Gamma$
	we can express $u(s)|_{e_s}$ as 
	$u(s)  = \mv{S}_{e_s}(t_s) B^\alpha u_{e_s} + u_{B,e_s}(t_s)$,
	where $u_{B,e_s}$ is a Whittle--Mat\'ern bridge process on $[0, \ell_{e_s}]$, independent of $\mv{U}$. Thus, by defining the low-rank process $u_\Gamma(s) =  \mv{S}_{e}(t) \mv{D}_e \mv{U}$, for $s=(e,t)\in\Gamma$,
	we arrive at the representation \eqref{eq:bridge_representation}.
\end{proof}

\subsection{Proofs for Section~\ref{sec:likandpred} and Section~\ref{sec:predict}}\label{app:proofs:likelihood-prediction}

\begin{proof}[Proof of Proposition~\ref{Them:piAXsoft}]
	Recall that $\mv{U}^* = \mv{TU}$. We have $\pi_{\mv{Y} | \mv{KU}}\left(  \mv{y} | \mv{b} \right)= \pi_{\mv{Y} | \mv{U}_\A^*}(\mv{y}|\mv{b}^*)$ and
			\begin{align}
					\pi_{\mv{Y} | \mv{U}^*_{\A}}\left(  \mv{y} | \mv{b}^* \right)
					&=  \int  \pi_{\mv{Y}|(\mv{U}^*_{\Ac}, \mv{U}^*_{\A})}(\mv{y}|(\mv{u}_{\Ac}^*, \mv{b}^*)) \pi_{\mv{U}^*_{\Ac}| \mv{U}^*_{\A}}(\mv{u}_{\Ac}^*| \mv{b}^*) d\mv{u}^*_{\Ac}.	\label{eq:decomposePI}
			\end{align}
			The goal is to derive an explicit form of the density by evaluating the integral in \eqref{eq:decomposePI}. 
			To shorten the notation, we let $q(\mv{K},\mv{x})$ denote the quadratic form $\mv{x}^\top\mv{K}\mv{x}$ for a matrix $\mv{K}$ and a vector $\mv{x}$. 
			Recall that $\mv{Y}|\mv{U}=\mv{u} \sim \pN(\mv{B}\mv{u}, \mv{\Sigma})$, so that
			\begin{align}\label{eq:proof_cond1}
					\pi_{\mv{Y}|(\mv{U}_\Ac^*,\mv{U}_{\A}^*)}(\mv{y}|(\mv{u}_{\Ac}^*,\mv{b}^*)) 
					&=  \frac{1}{(2\pi)^{\frac{n}{2}}|\mv{\Sigma}|^{1/2}}\exp \left( -\frac{1}{2}q\left(\mv{\Sigma}^{-1}, \mv{y}- \mv{B}^*
					\left[\mv{b}^*, \,\, 
							\mv{u}^*_{\Ac} 
					\right]^{\trsp}
					\right)\right) \\
					&\propto \exp\left(-\frac{1}{2}\mv{u}^{*\top}_{\Ac}\mv{B}^{*\top}_{\Ac,} \mv{\Sigma}^{-1} \mv{B}^*_{,\Ac}\mv{u}^*_{\Ac}+ \mv{y}^{\top}\mv{\Sigma}^{-1}\mv{B}^*_{,\Ac}\mv{u}^*_{\Ac} \right),\nonumber
			\end{align}
			where the proportionality holds as a function of $\mv{u}_{\Ac}^*$.
			By the proof of \citet[][Theorem 2]{bolin2021efficient}, we have
			\begin{equation}\label{eq:Xconditional}
					\mv{U}_{\Ac}^*|\mv{U}^*_\A= \mv{b}^* \sim  \pN\left(
					\mv{\mu}^*_{\Ac}-\left(\mv{Q}_{\Ac\Ac}^{*}\right)^{-1} \mv{Q}_{\Ac\A}^{*} \left( \mv{b}^* -  \mv{\mu}^*_{\A}\right)
					, (\mv{Q}_{\Ac\Ac}^*)^{-1} \right).
			\end{equation}
			By these expressions, 
			\begin{align*}
					&\pi_{\mv{Y}|(\mv{U}^*_{\Ac}, \mv{U}^*_{\A})}(\mv{y}|(\mv{u}_{\Ac}^*, \mv{b}^*)) \pi_{\mv{U}^*_{\Ac}| \mv{U}^*_{\A}}(\mv{u}_{\Ac}^*| \mv{b}^*) =\\
					&= \exp\left(-\frac{1}{2}\mv{u}^{*\top}_{\Ac} \mv{B}^{*\top}_{\Ac,}\mv{\Sigma}^{-1} \mv{B}^*_{,\Ac} \mv{u}^*_{\Ac}+ \left( \mv{B}^{*\top}_{\Ac,} \mv{\Sigma}^{-1}\mv{y} \right)^{\top}\mv{u}^*_{\Ac} \right) \frac{ |\mv{Q}^*_{\Ac\Ac}|^{1/2}}{\left(2\pi\right)^{\nicefrac{(n+m-k)}{2}} |\mv{\Sigma}|^{1/2}} \cdot \\
					&\quad\exp\left(-\frac{1}{2} \mv{u}^{*\top}_{\Ac} \mv{Q}^*_{\Ac\Ac} \mv{u}^*_{\Ac}+ \left(  \mv{Q}^*_{\Ac\Ac} \widetilde{\mv{\mu}}^*_{\Ac}\right)^{\top} \mv{u}^*_{\Ac}   \right)
					\exp \left(- \frac{1}{2}\left[  \mv{y}^{\top}\mv{\Sigma}^{-1}\mv{y}+  \widetilde{\mv{\mu}}_{\Ac}^{*\top}  \mv{Q}^*_{\Ac\Ac}  \widetilde{\mv{\mu}}^*_{\Ac} \right] \right) \\
					&= 
					\frac{\pi_{\mv{U}^*_{\Ac}|(\mv{Y}, \mv{U}^*_{\A})}(\mv{u}_{\Ac}^*| (\mv{y},\mv{b}^*)) \exp \left( \frac{1}{2}  \widehat{\mv{\mu}}_{\Ac}^{*\top} \widehat{\mv{Q}}^*_{\Ac\Ac} \widehat{\mv{\mu}}^*_{\Ac} \right) }{|\mv{Q}^*_{\Ac\Ac}|^{-1/2}|\mv{\Sigma}|^{1/2} \left(2\pi\right)^{\nicefrac{n}{2}} }
					\exp \bigl(- \frac{1}{2}\bigl[  \mv{y}^{\top} \mv{\Sigma}^{-1}\mv{y} +  \widetilde{\mv{\mu}}_{\Ac}^{*\top}  \mv{Q}^*_{\Ac\Ac}\widetilde{\mv{\mu}}^*_{\Ac}\bigr]\bigr).
			\end{align*}
			The desired result is now obtained by 
			inserting this expression in \eqref{eq:decomposePI} and evaluating the integral, noting  that $\pi_{\mv{U}^*_{\Ac}|(\mv{Y}, \mv{U}^*_{\A})}(\mv{u}_{\Ac}^*| (\mv{y},\mv{b}^*))$ integrates to 1. 
\end{proof}

\begin{proof}[Proof of Lemma~\ref{lem:kriging_alpha1}]
Using the bridge representation from Corollary~\ref{cor:bridge_representation}, we have
	\begin{align} 
	\mathbb{E}\left[u(s^\star) \mid \mv{Y}\right] =& \mathbb{E}\left[ u_\Gamma(s^\star)  +u_{B,e^\star}(t^\star) \mid \mv{Y} \right] 
=   \mathbb{E}[u_\Gamma(s^\star)\mid \mv{Y}] + \mathbb{E}\left[ u_{B,e^\star}(t^\star) \mid \mv{Y} \right] \nonumber \\
=&  \mathbb{E}[u_\Gamma(s^\star)\mid \mv{Y}]+  \mathbb{E}\left[\mathbb{E}\left[ u_{B,e^\star}(t^\star) \mid \{\mv{Y}_{e^\star},{u}_\Gamma(\mv{s}_{e^\star})\}  \right]\mid \mv{Y}\right]\label{eq:kriging3}  \\
=&  \mathbb{E}[u_\Gamma(s^\star)\mid \mv{Y}] +  \mathbb{E}\left[\mathbb{E}\left[ u_{B,e^\star}(t^\star) \mid \mv{Y}_{e^\star}-{u}_\Gamma(\mv{s}_{e^\star})  \right]\mid \mv{Y} \right]\label{eq:kriging4}  \\
=& \mathbb{E}[u_\Gamma(s^\star)\mid \mv{Y}] +  \mathbb{E}\left[r_{e^{\star}}(t^\star, \mv{t}_{e^\star})\mv{\Sigma}_{e^\star}^{-1}(\mv{Y}_{e^\star} - {u}_\Gamma(\mv{s}_{e^\star}))\mid \mv{Y} \right]\label{eq:kriging5}\\
=& \mathbb{E}[u_\Gamma(s^\star)\mid \mv{Y}] +r_{e^{\star}}(t^\star, \mv{t}_{e^\star})\mv{\Sigma}_{e^\star}^{-1}\left(\mv{Y}_{e^\star} - \mathbb{E}\left[{u}_\Gamma(\mv{s}_{e^\star})\mid \mv{Y} \right]\right),\nonumber
\end{align}
where in \eqref{eq:kriging3} we used the tower property of conditional expectations, in \eqref{eq:kriging4} 
we used the fact that $\sigma(\{\mv{Y}_e, u_\Gamma(\mv{s}_e)\}_{e\in\mathcal{E}})=\sigma(\{\mv{Y}_e - u_\Gamma(\mv{s}_e)\}_{e\in\mathcal{E}}, \{u_\Gamma(\mv{s}_e)\}_{e\in\mathcal{E}})$,
 that $\{u_\Gamma(\mv{s}_e)\}_{e\in\mathcal{E}}$ is independent of $u_{B,e}$, 
and that $\mv{Y}_{\widetilde{e}} - u_\Gamma(\mv{s}_{\widetilde{e}})= u_{B,\widetilde{e}}+\mv{\varepsilon}_{\widetilde{e}}$ is independent of
$u_{B,e}$ for $e\neq \widetilde{e}$. In \eqref{eq:kriging5}, we used that $\mathbb{E}\left[ u_{B,e}(t^\star) \mid \mv{Y}_e-{u}_\Gamma(\mv{s}_e)  \right] = \mathbb{E}\left[ u_{B,e}(t^\star) \mid {u}_{B,e}(\mv{t}_e) + \mv{\varepsilon}_e \right]$, 
applied the kriging formula, and used $u_{B,\widetilde{e}}+\mv{\varepsilon}_{\widetilde{e}}=\mv{Y}_{\widetilde{e}} - u_\Gamma(\mv{s}_{\widetilde{e}})$.
\end{proof}

\begin{proof}[Proof of Proposition~\ref{Them:piXgby}]
			Begin by noting that
			$$
			\pi_{\mv{U}_{\Ac}^*|(\mv{Y},\mv{U}^*_{\A})}(\mv{u}^*_{\Ac}| (\mv{y},\mv{b}^*)) \propto
			\pi_{\mv{Y}|(\mv{U}_\A^*,\mv{U}_{\Ac}^*)}(\mv{y}|(\mv{b}^*, \mv{u}_{\Ac}^*)) \pi(\mv{u}^*_{\Ac}|\mv{b}^*).
			$$ 
			By \eqref{eq:proof_cond1} and \eqref{eq:Xconditional} we have  
			$\pi_{\mv{U}^*_{\Ac}|\mv{U}^*_\A}(\mv{u}^*_{\Ac}|\mv{b}^*) \propto
			\exp \left( -\frac{1}{2}  \left( \mv{u}^*_{\Ac}- \widetilde{\mv{\mu}}^*_{\Ac} \right)^\top  \mv{Q}^*_{\Ac\Ac}  \left( \mv{u}^*_{\Ac}- \widetilde{\mv{\mu}}^*_{\Ac} \right) \right),$ 
			as a function of $\mv{u}_{\Ac}^*$,
			where $\widetilde{\mv{\mu}}^*_{\Ac} =  \mv{\mu}^*_{\Ac}-\left(\mv{Q}_{\Ac\Ac}^{*} \right)^{-1}\mv{Q}_{\Ac\A}^{*} \left( \mv{b}^* -  \mv{\mu}^*_{\A}\right)$.
			Thus, 
			\begin{align*}
					\pi_{\mv{U}_{\Ac}^*|(\mv{Y},\mv{U}^*_{\A}) } (\mv{u}^*_{\Ac}|(\mv{y},\mv{b}^*))  \propto
					&\exp\left(-\frac{1}{2}\mv{u}^{*\top}_{\Ac} \mv{B}^{*\top}_{\Ac,}\mv{\Sigma}^{-1} \mv{B}^*_{,\Ac}\mv{u}^*_{\Ac}+ \left( \mv{B}^{*\top}_{\Ac,} \mv{\Sigma}^{-1}\mv{y} \right)^{\top}\mv{u}^*_{\Ac} \right) \times \\  
					\times& \exp\left(-\frac{1}{2} \mv{u}^{*\top}_{\Ac} \mv{Q}^*_{\Ac\Ac} \mv{u}^*_{\Ac}+ \left(  \mv{Q}^*_{\Ac\Ac} \widetilde{\mv{\mu}}^*_{\Ac}\right)^{\top} \mv{u}^*_{\Ac}   \right) \\
					\propto 	&\exp\left(- \frac{1}{2} \left(\mv{u}_{\Ac}^*-\widehat{\mv{\mu}}_{\Ac}^* \right)^\top \widehat{\mv{Q}}_{\Ac\Ac}^*
					\left(\mv{u}_{\Ac}^*-\widehat{\mv{\mu}}_{\Ac}^* \right)\right).
			\end{align*}
			Finally, the relation $\mv{U}= \mv{T}^\top\mv{U}^*$ completes the proof.
\end{proof}

\begin{proof}[Proof of Proposition~\ref{prop:kriging_corrected_simulation}]
Since $\widetilde{\mv{x}}_e(\mv{b}_e)$ is a linear transformation of a Gaussian
vector, it is Gaussian. Moreover, $\pE\{\widetilde{\mv{x}}_e(\mv{b}_e)\}
    =
    \mv{S}_e(\mv{t}_e^\star)\mv{b}_e$ and 
\begin{align*}
    \Cov\{\widetilde{\mv{x}}_e(\mv{b}_e)\}
    &=
    \Cov\{
        \mv{x}_{e,o}
        -
        \mv{S}_e(\mv{t}_e^\star)\mv{x}_{e,\partial}
    \} 
    =
    \mv{\Sigma}_{M,e}^\star
    -
    \mv{S}_e(\mv{t}_e^\star)
    \mv{\Sigma}_{e,\partial\partial}
    \mv{S}_e(\mv{t}_e^\star)^\trsp 
    =
    \mv{\Sigma}_e^\star .
\end{align*}
This proves the result.
\end{proof}

\bibliography{unified_graph_bib}

\end{document}